\definecolor{navyblue}{rgb}{0.0, 0.0, 0.5}
\definecolor{LightPink}{rgb}{0.858, 0.188, 0.478}
\newcommand{\norm}[1]{\left\lVert#1\right\rVert}
\newcommand{\HNI}{H_{\text{NI}}}
\newcommand{\ox}{\otimes}
\newcommand{\cD}{\mathcal{D}}
\newcommand{\PP}{\textsf{PP}}
\newcommand{\trho}{\tilde{\rho}}
\newcommand{\PostBQP}{\textsf{PostBQP}}
\newcommand{\pfail}{p_{\text{fail}}}
\newcommand{\jw}[1]{\textcolor{purple}{JW: #1}}
\newcommand{\ident}{\mathds{1}}
\DeclareMathOperator{\poly}{poly}
\DeclareMathOperator{\HW}{HW}
\DeclareMathOperator{\tr}{tr}
\Crefname{lemma}{Lemma}{Lemmas}
\Crefname{proposition}{Proposition}{Propositions}
\Crefname{definition}{Definition}{Definitions}
\Crefname{theorem}{Theorem}{Theorems}
\Crefname{conjecture}{Conjecture}{Conjectures}
\Crefname{corollary}{Corollary}{Corollaries}
\Crefname{example}{Example}{Examples}
\Crefname{section}{Section}{Sections}
\Crefname{appendix}{Appendix}{Appendices}
\crefname{figure}{Fig.}{Figs.}
\Crefname{figure}{Figure}{Figures}
\crefname{equation}{Eq.}{Eqs.}
\Crefname{equation}{Equation}{Equations}
\Crefname{table}{Table}{Tables}
\Crefname{item}{Property}{Properties}
\Crefname{remark}{Remark}{Remarks}
\newtheorem{theorem}{Theorem}
\newtheorem{definition}[theorem]{Definition}
\newtheorem{lemma}[theorem]{Lemma}
\newcommand\prob\textsc
\title{Gibbs Sampling gives Quantum Advantage at Constant Temperatures with $O(1)$-Local Hamiltonians}
\date{}
\author[1,2]{\href{https://orcid.org/0000-0001-6365-8238}{Joel~Rajakumar}}
\author[1,2]{\href{https://orcid.org/0000-0002-6077-4898}{James~D.~Watson}}
\affil[1]{ Joint Center for Quantum Information \& Computer Science, 
	National Institute of Standards \& Technology and University of Maryland, 
	College Park }
\affil[2]{Department of Computer Science and Institute for Advanced Computer Studies,
	University of Maryland, College Park}
\begin{document}

{\begingroup
\hypersetup{urlcolor=navyblue}
    \maketitle
\endgroup}
	\begin{abstract}
		Sampling from Gibbs states --- states corresponding to system in thermal equilibrium --- has recently been shown to be a task for which quantum computers are expected to achieve super-polynomial speed-up compared to classical computers, provided the locality of the Hamiltonian increases with the system size \cite{bergamaschi2024sample}.
		We extend these results to show that this quantum advantage still occurs for Gibbs states of Hamiltonians with $O(1)$-local interactions at constant temperature by showing classical hardness-of-sampling and demonstrating such Gibbs states can be prepared efficiently using a quantum computer.
        In particular, we show hardness-of-sampling is maintained even for 5-local Hamiltonians on a 3D lattice.
        We additionally show that the hardness-of-sampling is robust when we are only able to make imperfect measurements.
	\end{abstract}
	
	\section{Introduction}
    Gibbs states are fundamental objects of interest in many-body physics and chemistry, where they correspond to the state a system equilibrates to at a fixed temperature, and play an important role in semidefinite program solving and machine learning.
    One of the key uses cases for quantum computers has been to simulate many-body quantum systems, and with this in mind, a variety of quantum algorithms for Gibbs state preparation and sampling have been proposed\footnote{\cite{temme2011quantum, brandao2019finite, moussa2019low, motta2020determining,mozgunov2020completely, chen2021fast, wocjan2023szegedy, shtanko2021preparing, zhang2023dissipative, chen2023quantum, jiang2024quantum, chen2024randomized}. See table 1 of \cite{chen2023quantum} for a discussion of Gibbs state algorithms and their limitations. Beyond this, many algorithms for computing thermal expectation values by relation to the microcanonical ensemble \cite{lu2021algorithms,schuckert2023probing,ghanem2023robust}}, including a recent promising quantum generalisation of Metropolis-Hastings algorithm \cite{chen2023efficient, gilyen2024quantum}.

    At a given inverse temperature $\beta$ and Hamiltonian $H$, we define the Gibbs state as:
    \begin{align*}
        \rho(H, \beta) = \frac{e^{-\beta H}}{\tr\left[e^{-\beta H}\right]}.
    \end{align*}
    Despite their importance, the computational complexity of calculating quantum Gibbs state properties, and whether there is a quantum advantage, has remained poorly understood --- particularly at temperature independent of system size $\beta = \Theta(1)$. 
    For Gibbs states of both classical and quantum Hamiltonians, efficient algorithms are known to exist above certain critical temperatures  \cite{francca2017perfect,harrow2020classical, mann2021efficient,yin2023polynomial, rouze2024efficient, bakshi2024high}, and at sufficiently low $O(1)$ temperatures Gibbs state properties of classical Hamiltonians are known to be \textsf{NP}-hard  and even \textsf{MA}-complete to compute \cite{sly2010computational,crosson2010making, sly2012computational}.
    At even lower temperatures, $\beta = \Omega(\poly(n))$, the Gibbs state has high overlap with the system's ground state, and so for quantum Hamiltonians we can argue that cooling to these temperatures must be at least \textsf{QMA}-hard.
    Indeed, for quantum Hamiltonians at $\beta = \Omega(\poly(n))$, computing expectation values of local observables measured on Gibbs states has been shown to be hard for a class \textsf{QXC}, ``Quantum Approximate Counting'' \cite{bravyi2022quantum, bravyi2024quantum}\footnote{For $\beta=\Omega(\poly(n))$ it is also true that estimating the \textit{normalised} partition function is DQC1-hard \cite{brandao2008entanglement}}.
    But exactly how this class relates to other classical and quantum complexity classes, and whether we expect similar hardness results for $\beta=\Theta(1)$, remains poorly understood.
    The question of the complexity of Gibbs state properties is intimately related to the question of mixing times for preparing Gibbs states --- a rapid mixing time for an algorithm preparing a particular Gibbs state implies that preparing the Gibbs state in question is computationally tractable --- and many results have been proven for quantum a variety of Hamiltonians and algorithms \cite{gamarnik2024slow, basso2024optimizing, chen2024local, smid2025rapid, bergamaschi2025quantum}. 
    Gibbs sampling is also related to other important problems which have shown quantum speed-up including sampling log-concave distributions and volume estimation \cite{childs2022quantum, chakrabarti2023quantum}. 

    Recent work by \citeauthor{bergamaschi2024sample} consider the task of sampling bitstrings from quantum Gibbs states \cite{bergamaschi2024sample}. 
    That is, sampling from a distribution close in total variation distance to
    \begin{align*}
    \ P(x) = \frac{\bra{x}e^{-\beta H}\ket{x}}{\tr\left[e^{-\beta H}\right]}
    \end{align*}
    for a bitstring $x\in \{0,1\}^n$.
    \citeauthor{bergamaschi2024sample}  demonstrate that sampling bitstrings from families of Gibbs states of quantum Hamiltonians with $O(\log\log(n))$-local terms at temperature $\beta = \Theta(1)$ is classically intractable under complexity-theoretic conjectures (e.g. unless the polynomial hierarchy collapses to the third level) \cite{bergamaschi2024sample}.
    They further demonstrate that such Gibbs states can be prepared in polynomial time using a quantum computer, and hence the task of Gibbs sampling can be done efficiently with the aid of a quantum computer.
    Not only does this demonstrate a quantum advantage in Gibbs state preparation over classical computers --- and thus may be a good test of so-called quantum supremacy --- but it is arguably the first convincing demonstration that Gibbs states retain  non-trivial quantum computational properties at $\beta = \Theta(1)$ temperatures.

    In this work we build upon the work in \citeauthor{bergamaschi2024sample} and demonstrate there exist families of 5-local and 6-local Hamiltonians such that sampling from Gibbs states at $\beta = \Theta(1)$ temperatures remains classically intractable under complexity-theoretic conjectures.

    \begin{theorem}[(Informal) Classically Intractable Gibbs Sampling] \label{Theorem:Main}
    There exist two families $\mathcal{F}_1, \mathcal{F}_2$ of efficiently constructable Hamiltonians such that sampling from a probability distribution $Q(x)$ satisfying:
    \begin{align*}
        \norm{Q(x) - \frac{\bra{x}e^{-\beta H}\ket{x}}{\tr\left[e^{-\beta H}\right]} }_{1}\leq \epsilon,
    \end{align*}
    is not possible for randomised classical algorithms under complexity-theoretic conjectures, 
    for the following parameter regimes:
    \begin{enumerate}
        \item[$\mathcal{F}_1$:] the Hamiltonians are 5-local, nearest-neighbour Hamiltonians on a 3D cubic lattice, where each qubit is involved in at most $4$ Hamiltonian terms, and $\epsilon = O(\exp(-n))$.
        Furthermore, we allow each single-qubit measurement outcome to be incorrect with an $O(1)$ probability.
        
        \item[$\mathcal{F}_2$:] the Hamiltonians are 6-local, and  $\epsilon =O(1)$.
    \end{enumerate}
    Sampling from a $Q(x)$ close to either of these distributions can be done efficiently using a quantum computer.
    \end{theorem}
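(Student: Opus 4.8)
The backbone of the argument follows the template of Bergamaschi, Chen, and Huang: exhibit a family of Hamiltonians whose Gibbs state at constant $\beta$ encodes (a suitable postselected version of) the output distribution of a universal quantum computation, so that a classical sampler for $Q(x)$ close to the Gibbs bitstring distribution would — via Stockmeyer counting — place $\textsf{PostBQP}$ (hence $\textsf{PP}$) in the third level of the polynomial hierarchy, collapsing it. The new content is getting this with $O(1)$-locality (5- or 6-local) rather than $O(\log\log n)$-local, plus robustness to noisy measurements for $\mathcal{F}_1$. The plan is to (i) take a universal computation that is hard to classically sample from — an $\textsf{IQP}$ or random circuit family on $m=\poly(n)$ qubits — and feedman-style ``history-state'' it into a local Hamiltonian whose low-energy space encodes the circuit history; (ii) show that at constant $\beta$ the Gibbs state has $1/\poly$ (or constant) weight on this low-energy space, so its diagonal restricted to the output register reproduces the target distribution up to a known normalization; (iii) handle locality by using a circuit-to-Hamiltonian construction where each clock/data interaction is $O(1)$-local — e.g. a 1D or 3D spatially-sparse history Hamiltonian à la Aharonov–Gottesman–Irani–Kempe or the 5-local constructions of Kempe–Kitaev–Regev adapted to the lattice, using perturbation gadgets or a direct domain-wall clock to get down to 5-local nearest-neighbour on a 3D cube with bounded degree $4$.

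For the hardness reduction I would: first fix the ``hard to sample'' seed, say a depth-$O(1)$ $\textsf{IQP}$ circuit $C$ on $m$ qubits whose output distribution $p_C$ cannot be sampled to $O(1)$ $\ell_1$-error classically unless $\mathrm{PH}$ collapses (this is the standard $\textsf{PostBQP}$/Stockmeyer argument, robust to constant additive error if one works in $\ell_1$, hence fine for $\mathcal{F}_2$). Then build $H$ so that $\bra{x}e^{-\beta H}\ket{x}/\tr[e^{-\beta H}]$, restricted to the appropriate output coordinates and conditioned on ancilla/clock registers being in a ``valid final'' configuration, equals $p_C(x)$ up to a poly-bounded multiplicative factor; the conditioning is itself a measurement of $O(1)$-local observables, so a sampler for $Q$ yields a sampler for (a rescaled) $p_C$ with error blown up by at most $1/\Pr[\text{valid}]$. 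For $\mathcal{F}_1$ we want exponentially small $\epsilon$: this is available because at constant $\beta$ with a Hamiltonian whose spectral gap above the history manifold is $\Omega(1)$ (independent of $n$) the Boltzmann weight on ``non-history'' configurations is $e^{-\Omega(\beta)} = O(1)$, but to push the sampling error down to $\exp(-n)$ we instead amplify: run many independent copies / use an error-correcting encoding of the circuit output so that a classical sampler with $\exp(-n)$ error on the encoded distribution still solves the $\textsf{PP}$-hard counting problem. The noisy-measurement robustness for $\mathcal{F}_1$ is then immediate from the encoding: a constant per-qubit bit-flip rate is a product channel that an error-correcting outer code on the output register tolerates, so the whole pipeline (quantum Gibbs sampler $\to$ noisy readout $\to$ classical post-processing) still reproduces $p_C$ well enough.

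For the quantum upper bound — that a quantum computer can sample from some $Q$ within $\epsilon$ of the Gibbs distribution — I would invoke a constant-temperature Gibbs-sampling primitive: since $\beta = \Theta(1)$ and $H$ is $O(1)$-local with bounded degree, one can prepare $\rho(H,\beta)$ in $\poly(n)$ time either by the quantum Metropolis/Gibbs-sampler algorithms of Chen–Kastoryano–Gilyén (whose mixing at fixed $\beta$ on bounded-degree local Hamiltonians is $\poly(n)$, or at least $\poly(n)$ for the specific engineered $H$ because its structure — a frustration-free-like history term plus local checks — gives a provable spectral-gap/mixing bound), or, more robustly for the theorem, by the same route used in Bergamaschi et al.: directly build the (purified) Gibbs state of a commuting-ish or gadget Hamiltonian by dilating it, since our $H$ is designed so that $e^{-\beta H}$ is (close to) a product of local factors one can implement by block-encoding and rejection sampling. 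Then measuring in the computational basis yields samples from exactly $\bra{x}e^{-\beta H}\ket{x}/\tr[e^{-\beta H}]$, i.e. $\epsilon = 0$, and one absorbs any algorithmic imprecision into $\epsilon$.

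The main obstacle I anticipate is the simultaneous requirement of (a) $O(1)$-locality with bounded degree on a 3D lattice and (b) an $\Omega(1)$ ``soundness/energy penalty'' gap uniform in $n$ that survives the circuit-to-Hamiltonian encoding — standard history-state Hamiltonians have a promise gap shrinking like $1/\poly$ (or worse) in the circuit size, which would force $\beta = \omega(1)$ to suppress the bad subspace. Overcoming this is exactly where the cleverness lies: one must use a construction whose ``propagation'' and ``clock-validity'' terms give a gap that is $\Theta(1)$ after restricting to the relevant symmetry sector (e.g. a Feynman–Kitaev clock with a constant number of clock qubits per gate and a domain-wall clock on the lattice, or the sparse-Hamiltonian trick of spreading the $O(\log\log n)$-local terms of Bergamaschi et al. into $O(1)$-local gadgets while controlling the gadget error) — together with the observation that we do not need a true spectral gap, only that the Gibbs weight on the ``cheating'' subspace be bounded by a constant $< 1$, which is a weaker and achievable condition. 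Verifying that the perturbation/gadget reductions preserve the Gibbs distribution of the output register (not just the ground space) is the technically delicate bookkeeping step, and I would handle it by tracking the partition function contributions sector by sector rather than via operator norms.
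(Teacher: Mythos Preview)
Your proposal has a fundamental gap: you are reaching for a Feynman--Kitaev history-state / circuit-to-Hamiltonian encoding (clock register, propagation terms, perturbation gadgets to reduce locality), and then worrying --- correctly --- that such constructions have promise gaps shrinking like $1/\poly(n)$, which would force $\beta = \omega(1)$. The paper bypasses this entirely by using a completely different and much simpler construction: the \emph{parent Hamiltonian} $H_C = C\,\HNI\,C^\dagger$ where $\HNI = \sum_i \tfrac{1}{2}(\ident - Z_i)$. The key identity (\cref{Lemma:Gibbs_to_Circuit_Sampling_Main}) is that the Gibbs state of $H_C$ is \emph{exactly} $C\big(\cD_q^{\otimes n}(\ket{0}\bra{0}^{\otimes n})\big)C^\dagger$ with $q = e^{-\beta}/(1+e^{-\beta})$; no spectral-gap argument, no conditioning on a ``valid final'' clock configuration, no sector-by-sector partition-function bookkeeping. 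The locality of $H_C$ is governed solely by the lightcone of $C$: if $C$ has depth $d$ with $2$-local gates, each $C Z_i C^\dagger$ is at most $(d{+}1)$-local.

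With this in hand, the two families are obtained not by gadgetry but by choosing $C$ from known constant-depth IQP families that remain hard to sample under \emph{input} bit-flip noise of strength $q = \Theta(1)$. For $\mathcal{F}_1$ the paper takes the depth-$4$ topologically protected IQP circuits of Fujii--Morimae on a 3D lattice (\cref{Lemma:Hardness_Exponential}), giving $5$-local nearest-neighbour $H_C$ directly; robustness to noisy measurements then falls out for free because post-measurement bit-flips compose with the input noise into a single effective noise rate $q'$, i.e.\ $\cD_q^{\otimes n}(\rho(H_C,\beta)) = \rho(H_C,\beta')$ for some $\beta' = \Theta(1)$ (\cref{Sec:Measurement_Errors}) --- no outer error-correcting code on the output is needed. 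For $\mathcal{F}_2$ the paper takes the constant-depth 2D IQP circuits of \cref{Lemma:HangleiterCircuit} and prepends a CNOT-based repetition encoding $B$ with $r = O(\log n)$ ancillas per qubit (\cref{Lemma:CNOT_Encoding_Main}), which suppresses the effective input error to $O(1/n)$ while keeping the parent Hamiltonian $6$-local. The quantum easiness side is not a generic mixing-time bound but a specific result for parent Hamiltonians (\cref{Theorem:Efficient_Gibbs_Prep}). None of the obstacles you flagged arise, because the construction you proposed is not the one used.
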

    
    \noindent We prove our results by combining results from the hardness of sampling IQP circuits with results from measurement-based quantum computation and error-detection protocols.
    In particular, we utilise work on the hardness of sampling from IQP circuits developed by \citeauthor{fujii2016computational} \cite{fujii2016computational} and \citeauthor{Bremner_2017} \cite{Bremner_2017} to construct families of Hamiltonians whose corresponding Gibbs states inherent this hardness of sampling.
    We anticipate that the families of Hamiltonians in $\mathcal{F}_1,\mathcal{F}_2$ are easier to realise experimentally than those presented in previous works.
    \paragraph{Paper Outline}
    In \cref{Sec:Preliminaries} we give the preliminaries and notation.
    \Cref{Sec:Hardness_of_Sampling} contains an overview of the proof techniques, as well as our main results:
    \cref{Sec:Hardness_Exponential} proves hardness of sampling Gibbs states of 5-local Hamiltonians on a 3D lattice and
    \Cref{Sec:Hardness_Polynomial} proves hardness for 6-local Hamiltonians.
 In \cref{Sec:Measurement_Errors}, we show how these hardness results can be made robust to errors in measurement.
    In \cref{Sec:Verification} we suggest a heuristic method of demonstrating that we have prepared the desired Gibbs state.
    Finally, we discuss this work and open questions in \cref{Sec:Discussion}.

\section{Preliminaries} \label{Sec:Preliminaries}

\subsection{Notation}
Consider a set of $n$ particles, each with local Hilbert space $\mathbb{C}^d$, then the full Hilbert space is $(\mathbb{C}^d)^{\ox n}$.
For a Hilbert space $\mathcal{H}$ we denote the set of bounded linear operators as $\mathcal{B}(\mathcal{H})$.
A Hamiltonian acting on $n$ qudits is a Hermitian operator $H\in \mathcal{B}((\mathbb{C}^d)^{\ox n})$.
For the rest of this work will we consider the case $d=2$, i.e. the Hamiltonian acts on qubits.
$\norm{\cdot}$ will denote the operator norm, $\norm{\cdot}_1$ will denote the Schatten 1-norm when acting on operators, and if acting on probability distributions is the standard $L1$ distance between probability distributions.

\paragraph{Hamiltonian Parameters.} The Hamiltonian is called a $k$-local Hamiltonian if we can write it as:
\begin{align*}
    H = \sum_i h_i
\end{align*}
if each $h_i$ acts on at most $k$ many qubits.
Given a local Hamiltonian, we can define a interaction (hyper)graph, where the vertices are given by the qubits and the (hyper)edges are given by the qubits $h_i$ acts non-trivially on.
We denote the maximum degree of the interaction graph as $\Delta$.
We assume that $\norm{h_i}=O(1)$ for all local terms in the Hamiltonian.

\paragraph{Thermal Physics.}

Given a temperature $T\in \mathbb{R}^+$, we will work with the inverse temperature $\beta = 1/T$.
	For a given Hamiltonian $H$ with eigenvalues $\{\lambda_i\}_i$, and a given inverse temperature $\beta$, the partition function is define as:
	\begin{align*}
		Z = \text{tr}\left[ e^{-\beta H} \right] 
		= \sum_i e^{-\beta \lambda_i},
	\end{align*}
	and the associated Gibbs state is:
	\begin{align*}
		\rho(H,\beta) = \frac{1}{Z}e^{-\beta H}.
	\end{align*}

    \paragraph{Noisy Circuits.} We will often discuss circuits by directly referring to their unitary matrix $C$, with the convention that $C\ket{0}^{\otimes n}$ is the output state after applying $C$ on $\ket{0}^{\otimes n}$. We will be interested in bit-flip noise in our circuit, and define the single-qubit channel:
    \begin{align*}
        &\cD_p(\rho) = (1-p)\rho + pX\rho X
    \end{align*}
    
	\noindent We use the following notation to refer to the output distribution of a circuit with noise on the input state:
	\begin{definition}[Circuit Sampling Distributions]
		For any circuit $C$ we will use $P_C$ to denote the output distribution of sampled bitstrings from $C$,
		\begin{align*}
			P_C(x) = |\bra{x}C\ket{0^n}|^2,
		\end{align*}
		and $P_{C,q}$ to denote the output distribution of sampled bitstrings from $C$ with independent single-qubit bit-flips applied on every input qubit with probability $q$:
  \begin{align*}
      P_{C,q}(x) = \tr\left[  \ket{x}\bra{x}\cdot C (\cD^{\ox n}_q( \ket{0}\bra{0}^{\ox n}))  C^\dagger \right].
  \end{align*}
	\end{definition}

 \subsection{Parent Hamiltonian Construction for Quantum Circuits}

	Initially consider a non-interacting Hamiltonian on $n$ qubits:
	\begin{align*}
		\HNI = \sum_{i=1}^n \frac{1}{2} (\mathds{1}-Z_i).
	\end{align*}
	It can be seen that the eigenstates of this Hamiltonian correspond to bitstrings $\ket{x}$ for $x\in \{0,1\}^n$, where the energy of eigenstate $\ket{x}$ is $\lambda_x = \HW(x)$, where $\HW$ denotes the Hamming weight of a string $x$.
	The zero-energy ground state is $\ket{0}^{\ox n}$.
	The corresponding Gibbs state of the Hamiltonian is:
	\begin{align*}
		\frac{1}{Z}e^{-\beta H} = \frac{1}{Z}e^{-\beta \sum_{x\in \{0,1\}^n} \lambda_x \ket{x}\bra{x}}
	\end{align*}
	and $Z= \sum_{x\in \{0,1\}^n} e^{-\beta \lambda_x}$. 
	We see that $Z$ is the partition function for $n$ non-interacting spins, and hence $Z= (1+e^{-\beta})^n$. 
	Now, given a circuit $C$ on $n$ qubits, we can define a \textit{parent Hamiltonian:}
	\begin{align*}
		H_C = C\HNI C^\dagger.
	\end{align*}

 \subsection{Gibbs States of Parent Hamiltonians}
 We use the following two lemmas in our work. The first (implicitly used in \cite{bergamaschi2024sample}) shows that for the set of parent Hamiltonians formed using quantum circuits, their Gibbs states are equivalent to the circuit with bit-flip noise acting on the input.
 Here the input noise strength is related to the temperature of the Gibbs state.
 We provide an explicit proof in \cref{Appendix:Gibbs_IQP_Equivalence} for completeness.
    Here the temperature of the Gibbs state corresponds to the strength of the bit-flip noise.
	\begin{lemma}\label{Lemma:Gibbs_to_Circuit_Sampling_Main}
		For any circuit $C$ constructed from $k$-local gates of depth $d$, there exists a $O(kd)$-local parent Hamiltonian $H_C$, such that:
  \begin{align*}
      \frac{1}{Z}e^{-\beta H_C} &= C(\cD_q^{\ox n}(\ket{0}\bra{0}))C^\dagger
  \end{align*}
  for $q = \frac{e^{-\beta}}{1+e^{-\beta}}$.
	\end{lemma}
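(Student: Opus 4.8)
The plan is to reduce the claim to a one-line computation of the Gibbs state of $\HNI$ followed by ``transporting'' it through $C$. First I would use that $\HNI$ is diagonal in the computational basis with $\HNI\ket{x}=\HW(x)\ket{x}$, together with $Z=\tr[e^{-\beta\HNI}]=(1+e^{-\beta})^{n}$, to factorize
\begin{align*}
\frac{1}{Z}e^{-\beta\HNI}
=\frac{1}{(1+e^{-\beta})^{n}}\sum_{x\in\{0,1\}^{n}}e^{-\beta\HW(x)}\ket{x}\bra{x}
=\bigotimes_{i=1}^{n}\Bigl((1-q)\ket{0}\bra{0}+q\ket{1}\bra{1}\Bigr),
\end{align*}
with $q=\frac{e^{-\beta}}{1+e^{-\beta}}$ (so that $1-q=\frac{1}{1+e^{-\beta}}$). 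The next observation is that the single-qubit factor is exactly $\cD_{q}(\ket{0}\bra{0})=(1-q)\ket{0}\bra{0}+qX\ket{0}\bra{0}X$, hence $\frac{1}{Z}e^{-\beta\HNI}=\cD_{q}^{\ox n}(\ket{0}\bra{0}^{\ox n})$.

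Next I would push this identity through the circuit. Since $H_{C}=C\HNI C^{\dagger}$, functional calculus gives $e^{-\beta H_{C}}=Ce^{-\beta\HNI}C^{\dagger}$, and cyclicity of the trace gives $\tr[e^{-\beta H_{C}}]=\tr[e^{-\beta\HNI}]=Z$, so the normalization is unchanged. Therefore
\begin{align*}
\frac{1}{Z}e^{-\beta H_{C}}
=C\Bigl(\frac{1}{Z}e^{-\beta\HNI}\Bigr)C^{\dagger}
=C\bigl(\cD_{q}^{\ox n}(\ket{0}\bra{0}^{\ox n})\bigr)C^{\dagger},
\end{align*}
which is the desired identity. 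For the locality claim I would track the support of each term $C\,\tfrac12(\ident-Z_{i})\,C^{\dagger}$ of $H_{C}$: conjugating a single-qubit operator by one $k$-local gate enlarges its support by at most $k-1$ qubits, and only gates meeting the current support act nontrivially, so after the gates of $C$ lying in the (backward) lightcone of qubit $i$ the support has size $O(kd)$; summing over $i$ then shows $H_{C}$ is $O(kd)$-local.

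I do not expect a genuine obstacle: the statement is essentially a direct calculation. The only points needing care are the product-state factorization of $e^{-\beta\HNI}$ and the matching of the noise parameter $q$ to the inverse temperature $\beta$, together with verifying that conjugation by $C$ preserves the partition function (it does, by unitary invariance of the trace). The lightcone bookkeeping for the $O(kd)$-locality is then routine.
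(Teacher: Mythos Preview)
Your proposal is correct and follows essentially the same route as the paper: both identify $\frac{1}{Z}e^{-\beta\HNI}$ with $\cD_q^{\ox n}(\ket{0}\bra{0}^{\ox n})$ via the Hamming-weight expansion and the relation $q=\frac{e^{-\beta}}{1+e^{-\beta}}$, then conjugate by $C$ (using that unitary conjugation preserves the partition function). Your version additionally spells out the lightcone argument for the $O(kd)$-locality, which the paper's proof leaves implicit.
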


 The second lemma, which is a large portion of the technical work in \cite{bergamaschi2024sample}, shows that a quantum computer can efficiently prepare the Gibbs States of these parent Hamiltonians.
 \begin{lemma}[Efficient Gibbs State Preparation for Parent Hamiltonians, Lemma 1.2 of \cite{bergamaschi2024sample}] \label{Theorem:Efficient_Gibbs_Prep}
		Fix $\beta > 0$, and let $H_C$ be the parent Hamiltonian of a quantum circuit $C$ on $n$ qubits, of depth $d$ and locality $\ell$. Then, there exists a quantum algorithm which can prepare the Gibbs state of $H$ at inverse-temperature $\beta$ up to an error $\epsilon$ in trace distance in time $O(2^{4\ell}2^d e^\beta n\poly(\log \frac{n}{\epsilon},\ell,\beta))$.
	\end{lemma}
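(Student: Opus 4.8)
The plan is to turn \cref{Lemma:Gibbs_to_Circuit_Sampling_Main} into a preparation procedure, which reduces the task from ``prepare a Gibbs state of a generic local Hamiltonian'' to ``prepare the output of a known circuit on a trivial mixed input.'' That lemma identifies the target with $C$ applied to the thermal product state of $\HNI$,
\begin{align*}
\frac{1}{Z}e^{-\beta H_C} = C\bigl(\cD_q^{\ox n}(\ket{0}\bra{0}^{\ox n})\bigr)C^\dagger,\qquad q=\frac{e^{-\beta}}{1+e^{-\beta}},
\end{align*}
and since $\cD_q(\ket{0}\bra{0}) = (1-q)\ket{0}\bra{0}+q\ket{1}\bra{1}$ is a classical mixture, the input is easy to produce. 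The algorithm I would analyse first is the direct one: sample $x\in\{0,1\}^n$ by flipping $n$ independent coins, each giving $1$ with probability $q$, prepare the basis state $\ket{x}$, and apply $C$ to obtain $C\ket{x}$. Averaging $C\ket{x}\bra{x}C^\dagger$ over the coins reproduces the right-hand side exactly, so the prepared ensemble equals the Gibbs state with no statistical error; equivalently one purifies each qubit with a single ancilla, applies $C$ on the system register, and discards the ancillas to obtain the state in coherent form.

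The only error then comes from compiling $C$ into the native gate set. A depth-$d$ circuit of $\ell$-local gates on $n$ qubits contains at most $O(nd)$ gates; synthesising each $\ell$-qubit gate to operator-norm precision $\delta$ costs $O(2^{2\ell})\,\polylog(1/\delta)$ elementary gates via a generic two-qubit decomposition followed by Solovay--Kitaev. Choosing $\delta = \epsilon/O(nd)$ and using subadditivity of trace distance under composition gives total error $\epsilon$ in time $2^{2\ell}\,nd\,\polylog(n/\epsilon)$. Since $d\le 2^d$, $2^{2\ell}\le 2^{4\ell}$, and $e^\beta\ge 1$, this already sits inside the claimed envelope $O(2^{4\ell}2^d e^\beta n\,\poly(\log\tfrac{n}{\epsilon},\ell,\beta))$, so it suffices to establish the lemma; the stated runtime is a loose upper bound.

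To instead reproduce the precise shape of the bound — and to match the more general dissipative route of \cite{bergamaschi2024sample}, which does not assume monolithic access to $C$ — I would build a Davies-type Lindbladian whose unique fixed point is the Gibbs state. For $\HNI$ the natural detailed-balance generator uses single-qubit raising/lowering jump operators, has the thermal product state as its stationary point, and has a constant spectral gap. Conjugating the whole generator by $C$ maps its fixed point to $\rho(H_C,\beta)$ and preserves the gap, so the evolution mixes to trace distance $\epsilon$ in time $\polylog(n/\epsilon)$. The remaining factors appear in the per-step cost: each conjugated jump operator $C\sigma^{\pm}_i C^\dagger$ is supported on the backward light cone of qubit $i$ through $C$, of size $2^{O(d)}$, giving the $2^d$ factor; implementing the light-cone unitaries costs $2^{O(\ell)}$ per gate, giving $2^{4\ell}$; and the detailed-balance Boltzmann weights $e^{\pm\beta\omega}$ for transition frequencies $\omega$ give the $e^\beta$ factor.

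The step I expect to be the main obstacle lives entirely in this dissipative route: the careful accounting that keeps the simulation local and efficient — bounding light-cone growth, showing the conjugated generator can be simulated term-by-term with controlled error, and verifying that the spectral gap (hence the mixing time) survives conjugation by $C$. This is precisely the technical content attributed to \cite{bergamaschi2024sample}. By contrast, the direct route above has essentially no obstacle beyond the gate-count bookkeeping of the previous paragraph, so it is what I would present as the proof, invoking the dissipative construction only to explain the form of the stated runtime.
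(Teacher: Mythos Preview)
The paper does not prove this lemma; it is quoted verbatim as Lemma~1.2 of \cite{bergamaschi2024sample} and the proof is entirely deferred there, with the remark that it constitutes ``a large portion of the technical work'' of that reference and relies on the Lindbladian Gibbs sampler of \cite{chen2023quantum}. So there is no in-paper proof to compare against; what follows is an assessment of your proposal relative to the cited source and the role the lemma plays here.

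Your direct approach is correct and, as you note, yields a runtime strictly inside the stated envelope. It is also genuinely different from what \cite{bergamaschi2024sample} does: they run a quantum Gibbs sampler (a dissipative/Lindbladian algorithm taking $H_C$ as input) and prove fast mixing for parent Hamiltonians, whereas you simply exploit knowledge of $C$ to build the state from scratch. Your method buys simplicity and a tighter bound; theirs buys the statement the paper actually needs. The point of invoking this lemma in \cref{Sec:Hardness_of_Sampling} is to exhibit quantum advantage \emph{for the task of Gibbs sampling}: one hands a quantum machine the local Hamiltonian $H_C$ and a temperature, and it produces the Gibbs state via a generic thermalisation procedure, without being told $C$. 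Your direct route short-circuits this --- it prepares the right state, but it is not a Gibbs sampler, and presenting it as the proof would undercut the paper's narrative that Gibbs sampling itself separates quantum from classical.

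You already anticipate this in your second paragraph of analysis, and your Davies-generator sketch is the right picture of the \cite{bergamaschi2024sample} argument: conjugate the trivially-gapped thermal Lindbladian of $\HNI$ by $C$, inherit the gap, and pay $2^{O(d)}$ for light-cone support of the conjugated jump operators, $2^{O(\ell)}$ for synthesising the local pieces, and $e^{O(\beta)}$ for the detailed-balance ratios. That is the content you should cite rather than reprove. If you want to present the direct construction, frame it as a sanity check that the state is quantumly easy, and keep the lemma as the statement that a bona fide Gibbs sampler also succeeds.
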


    At a high level, this result proves a rapid mixing property of a particular Lindbladian evolution that converges to the Gibbs state. The analysis takes advantage of the fact that the Hamiltonian is the parent Hamiltonian of a $O(\log(n))$-depth circuit, which means that the `lightcones' of any qubit can be bounded.

\section{Hardness of Sampling Gibbs States with \textit{O(1)}-Local Hamiltonians}
 \label{Sec:Hardness_of_Sampling}
	\subsection{Overview of Techniques} 
 We wish to prove hardness of classically sampling the from Gibbs states of a Hamiltonian.
 To do so, we turn to \cref{Lemma:Gibbs_to_Circuit_Sampling_Main}:
 our goal is to find a family of circuits $C$ with corresponding parent Hamiltonians whose Gibbs states are hard to sample from with a classical computer but efficiently sampleable with a quantum computer. $C$ must satisfy the following requirements, 
 \begin{enumerate}
     \item For some noise strength $q = O(1)$, $P_{C,q}$ must be hard to sample from with either inverse exponential or additive error.
     \item $H_C$ must be $O(1)$-local.
     \item $C$ must have $O(\log n)$ depth.
 \end{enumerate}
 In this work, we will choose $C$ to be from the family of IQP circuits with the aim of maintaining classical hardness-of-sampling while ensuring the associated parent Hamiltonian remains $O(1)$-local.
 The main reason for choosing these circuits is that there are known constant depth IQP circuits which are hard to sample from Refs. \cite{fujii2016computational, Hangleiter_2018}, and the shallowness of the circuit is useful to obtain $O(1)$-locality of the parent Hamiltonian. 
 In particular, we need to choose a family of IQP circuits whose hardness of sampling is robust to bit-flip noise on the input.
 This is non-trivial since in many cases the presence of noise may make the IQP circuit classically easy to sample \cite{bremner2016average, rajakumar2024polynomial}.
 
    To prove our results, we use two separate error mitigation techniques. 
    First, we use a result from \citeauthor{fujii2016computational} \cite{fujii2016computational}.
    \citeauthor{fujii2016computational} make use of topologically protected circuits to ensure that, provided the circuit noise is below a certain threshold, exact sampling from the circuit remains difficult. 
    Modifying this proof allows us to obtain hardness of sampling with inverse exponential error, and 
    this allows us to arrive at $\mathcal{F}_1$ of \cref{Theorem:Main}, a family of $5$-local Hamiltonians with degree $5$, which are hard to sample from with inverse exponential error. 
    
    Second, we implement an error-detection strategy inspired by the techniques of Refs.
    \cite{Bremner_2017, bergamaschi2024sample}, where parts of the circuit are repeated and we simultaneously introduce flag qubits which flip to show where an error may have occurred.
    This allows us to arrive at hardness of sampling from with additive error for $\mathcal{F}_2$, a family of Hamiltonians with $O(1)$ locality.
    Our hardness-of-sampling proof follows the similar outline as Ref. \cite{bergamaschi2024sample}, but here we prove hardness for different sets of circuits which allows us to obtain hardness for families of Hamiltonians with $O(1)$-locality rather than $O(\log\log(n))$-locality. 
    We also note similar work on thermal states of measurement-based thermal states in \cite{fujii2015quantum}.
 
    Finally, by combining these two techniques, we obtain a family of $6$-local Hamiltonians with max degree $\Delta$, such that their Gibbs states are hard to sample from with inverse exponential error when $\beta = O(1/\sqrt{\Delta})$, which complements easiness results from \cite{bakshi2024high}.
    \color{black}

	\subsection{Hardness of Sampling Gibbs States on 3D Lattices}
 \label{Sec:Hardness_Exponential}

    \citeauthor{fujii2016computational} demonstrate a family of constant-depth, geometrically local IQP circuits which are hard to \textit{exactly} sample from in the presence of noise of constant strength \cite{fujii2016computational}. 
    At a high level, \citeauthor{fujii2016computational} give a protocol in which a cluster state is constructed using 
    a depth 4 IQP circuit. 
    If the noise level is sufficiently smaller than the threshold value for topologically protected MBQC, then the output distribution of the circuit cannot be exactly sampled from. 
     Provided the noise is restricted to bit-flip noise and kept below a threshold strength of $\sim 0.134$, then T-gate synthesis is still possible in the circuit,
     and postselecting on error-free outcomes allows one to decide $\textsf{PostBQP}$-complete problems. 
     Since $\textsf{PostBQP}=\textsf{PP}$ and it is believed that $\textsf{PostBQP}\not \subseteq\textsf{PostBPP} $ unless the polynomial hierarchy collapses to the third level, then sampling from the noisy circuit is still hard \cite{aaronson2005quantum}.

	By taking this set of topologically protect circuits from \citeauthor{fujii2016computational} we prove the following lemma in  \cref{Appendix:Hardness_Exponential}.
	\begin{lemma}  \label{Lemma:Hardness_Exponential}
		There exists a family of IQP circuits $\mathcal{C}$, constructed on a 3D cubic lattice, consisting of a single layer of $e^{iZ\pi/8}$ gates and 4 layers of nearest-neighbour $e^{iZZ\pi/4}$ gates, such that sampling from any probability distribution $Q$ over bitstrings which satisfies $\|P_{C,q}-Q\|_{1} < (1-q)^n2^{-6n-4}/5$ is not possible with a classical polynomial algorithm, unless the polynomial hierarchy collapses to the third level, when $q < 0.134$.
	\end{lemma}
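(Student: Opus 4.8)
The plan is to argue by contradiction in the standard Stockmeyer/Aaronson--Arkhipov style: assume a classical polynomial-time sampler $Q$ achieving the stated $\ell_1$-closeness exists, and derive that $\textsf{PostBPP} \supseteq \textsf{PostBQP} = \textsf{PP}$, collapsing the polynomial hierarchy to its third level by Toda's theorem. First I would set up the circuit family $\mathcal{C}$ explicitly, following \citeauthor{fujii2016computational}: the depth-$4$ nearest-neighbour $e^{iZZ\pi/4}$ layers on the 3D cubic lattice prepare a cluster state (up to local Hadamard/phase conventions absorbed into the IQP structure), the single layer of $e^{iZ\pi/8}$ gates implements the $T$-gate injections, and measuring in the computational basis realizes a topologically protected MBQC. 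The key input from the topological protection is that for bit-flip noise of strength $q < 0.134$ (below the fault-tolerance threshold $\sim 0.134$ for the relevant surface-code MBQC scheme), postselecting on the syndrome bits indicating "no detected error" yields an effective logical computation that is noiseless, so that one can decide any $\textsf{PostBQP}$-complete problem from the postselected distribution. This postselection succeeds with probability at least $(1-q)^n$ times the probability the underlying postselected $\textsf{PostBQP}$ computation accepts, since the worst case is that every one of the $\le n$ physical qubits must be flip-free.

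Next I would run the Stockmeyer argument. Given the hypothetical classical sampler for $Q$ with $\|P_{C,q} - Q\|_1 < (1-q)^n 2^{-6n-4}/5$, I would feed its randomness-plus-output circuit to Stockmeyer's approximate-counting algorithm (a $\textsf{BPP}^{\textsf{NP}}$ procedure), obtaining a multiplicative-factor estimate of $Q(x)$ for any target string $x$. Because the postselected event has probability at least $(1-q)^n \cdot 2^{-\text{poly}}$ — and here the factor $2^{-6n-4}$ in the statement is exactly the per-string weight bound coming from the number of output qubits and syndrome bits in the Fujii--Morimae construction — the $\ell_1$ error $(1-q)^n 2^{-6n-4}/5$ is small enough relative to the true probabilities on the postselected set that the additive error does not swamp the multiplicative estimate: on the conditioned distribution the relative error stays bounded below, say, $1/4$. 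Then the whole pipeline places the $\textsf{PostBQP}$ decision inside $\textsf{PostBPP}$ (more precisely $\textsf{BPP}^{\textsf{NP}}_{\text{path}}$-type reasoning, exactly as in \cite{aaronson2005quantum, Bremner_2017}), giving $\textsf{PP} = \textsf{PostBQP} \subseteq \textsf{PostBPP} \subseteq \Sigma_3^p$, hence $\textsf{PH}$ collapses.

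The step I expect to be the main obstacle is not the Stockmeyer plumbing — that is by now routine — but carefully tracking the postselection weight to justify the precise constant $2^{-6n-4}/5$: one must (i) confirm that the Fujii--Morimae circuit, after recasting as a pure IQP circuit with bit-flip noise on the input (via \cref{Lemma:Gibbs_to_Circuit_Sampling_Main}'s perspective, reading $q = e^{-\beta}/(1+e^{-\beta})$), still admits the error-detecting postselection with the claimed success probability, and (ii) verify that the number of physical and syndrome qubits contributing to the $2^{-6n}$ factor matches the lattice layout with each qubit in at most $4$ Hamiltonian terms. A secondary subtlety is that the $e^{iZ\pi/8}$ and $e^{iZZ\pi/4}$ gates are diagonal, so the "circuit" is genuinely IQP only after conjugating the input/output by Hadamards; I would note that this conjugation is benign because it commutes through the parent-Hamiltonian construction and preserves both the locality and the degree bounds, leaving the noise model as bit-flips on the effective input. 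Once these bookkeeping points are pinned down, the contradiction follows immediately.
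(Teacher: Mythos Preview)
Your overall contradiction strategy is right, but the route through Stockmeyer approximate counting is a detour the paper does not take and that is unnecessary in this regime. Stockmeyer is the tool for the \emph{additive} $O(1)$-error case (as in \cref{Sec:Hardness_Polynomial} and \cite{Bremner_2017}), where the postselection event has probability exponentially smaller than the allowed $\ell_1$ error and one must therefore estimate individual output probabilities in $\textsf{BPP}^{\textsf{NP}}$. Here the allowed error is already inverse-exponential, so the paper argues directly: a classical poly-time sampler for $Q$ is, by definition, a $\textsf{PostBPP}$ machine once you postselect on the registers $(y=0,y'=0)$. A short elementary lemma (\cref{Lemma:postselect}) converts the $\ell_1$ bound $\|P_{C,q}-Q\|_1 < \frac{\delta}{2+\delta}\,P_{C,q}(y=0,y'=0)$ into closeness of the conditionals $Q(x\,|\,y=0,y'=0)$ and $P_{C_w}(x\,|\,y=0)$, which then decides the $\textsf{PP}$-complete language. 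The whole argument is two applications of that lemma plus the observation $P_{C,q}(y'=0)\ge (1-q)^n$; no counting oracle appears.

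Two smaller corrections to your bookkeeping. First, the factor $2^{-6n-4}$ is not a ``per-string weight bound from the number of output and syndrome bits''; it is the lower bound on the postselection acceptance probability $P_{C_w}(y=0)$ that comes from the definition of $\textsf{PostBQP}^*$ (shown equal to $\textsf{PP}$ in \cite{fujii2016computational}), and the paper simply weakens $2^{-6m-4}$ to $2^{-6n-4}$ since $n\ge m$. Second, the constant $1/5$ is not geometric but arises from choosing the slack parameters $\delta',\delta''$ in the two invocations of \cref{Lemma:postselect} so that the combined conditional error stays below the $\textsf{PostBQP}$ confidence gap. Your concerns about recasting the Fujii--Morimae protocol as an IQP circuit with input bit-flip noise are legitimate and are exactly what the paper handles, but they are resolved by citing the threshold result of \cite{fujii2016computational} rather than by any new analysis.
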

 
	\noindent
    We now want to use this family of classically hard-to-sample circuits to generate hard-to-sample Gibbs states.
    In particular, we combine this hard-to-sample from family of IQP circuits with the fact that the Gibbs states of IQP parent Hamiltonians look like the output of noisy IQP circuits, as per \cref{Lemma:Gibbs_to_Circuit_Sampling_Main}, and we get the following.
 
	\begin{theorem}[Classical Hardness of Gibbs Sampling] \label{Theorem:4Local}
		One can efficiently construct a family of $5$-local Hamiltonians on a 3D cubic lattice of qubits, $\{ H_C\}_C$, whose Gibbs states at $\beta = O(1)$ are hard to classically sample from with $1$-error $(1-q)^n2^{-6n-4}/5$, unless the polynomial hierarchy collapses to the 3rd level.
	\end{theorem}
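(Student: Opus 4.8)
The plan is to push the hard-to-sample IQP circuit family of \cref{Lemma:Hardness_Exponential} through the parent-Hamiltonian/noisy-circuit correspondence of \cref{Lemma:Gibbs_to_Circuit_Sampling_Main}, and then check that the resulting parent Hamiltonians satisfy the geometric and locality constraints in the statement. Let $\mathcal{C}$ be the family provided by \cref{Lemma:Hardness_Exponential}: each $C\in\mathcal{C}$ acts on the qubits of a 3D cubic lattice and, up to the layers of Hadamards implicit in the IQP sampling setup, consists of a single layer of $e^{iZ\pi/8}$ gates followed by $4$ layers of nearest-neighbour $e^{iZZ\pi/4}$ gates. For each $C$ form the parent Hamiltonian
\begin{align*}
    H_C \;=\; C\,\HNI\,C^\dagger \;=\; \sum_{i=1}^{n}\tfrac12\bigl(\ident - C Z_i C^\dagger\bigr).
\end{align*}

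The main point is to show that $H_C$ is $5$-local, nearest-neighbour on the lattice, of maximum degree $5$, and efficiently computable --- sharper than the generic $O(kd)$-local bound of \cref{Lemma:Gibbs_to_Circuit_Sampling_Main}. Conjugating the outer Hadamards into the diagonal part, one may regard $C$ as built from $e^{iX\pi/8}$ and nearest-neighbour $e^{iXX\pi/4}$ gates, and track $C Z_i C^\dagger$ directly. Because every two-qubit gate has angle exactly $\pi/4$, each layer of two-qubit gates sends the single-qubit operator currently sitting on qubit $i$ to a single-qubit operator on $i$ tensored with one extra single-qubit factor on the unique neighbour coupled to $i$ in that layer; this new factor commutes with all later gates and with the final Hadamards, so it never spreads further, and the lone $e^{iZ\pi/8}$ layer only acts on qubit $i$ itself. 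Propagating through the five layers shows that $C Z_i C^\dagger$ is supported on $i$ together with at most the $4$ lattice neighbours coupled to it, so each term $\tfrac12(\ident - C Z_i C^\dagger)$ is $5$-local, nearest-neighbour, and of operator norm $\le 1$; a qubit $j$ lies in the support of the $i$-th term only when $j=i$ or $j$ is one of those $\le 4$ neighbours, which bounds the degree by $5$. Since $\mathcal{C}$ is efficiently constructible and the conjugation above is an explicit bounded-size local calculation, $\{H_C\}_C$ is efficiently constructible.

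With the Hamiltonian in hand, invoke \cref{Lemma:Gibbs_to_Circuit_Sampling_Main}: $\rho(H_C,\beta) = C\bigl(\cD_q^{\ox n}(\ket{0}\bra{0})\bigr)C^\dagger$ with $q = e^{-\beta}/(1+e^{-\beta})$. Fixing $\beta$ to a sufficiently large constant makes $q < 0.134$ while keeping $\beta = O(1)$, and then $1-q = 1/(1+e^{-\beta})$ is a constant in $(0,1)$, so the claimed error $(1-q)^n 2^{-6n-4}/5$ is $\exp(-\Theta(n))$. Measuring $\rho(H_C,\beta)$ in the computational basis produces exactly the noisy-circuit distribution $P_{C,q}$, so any classical algorithm sampling a distribution $Q$ with $\norm{Q - P_{C,q}}_1 < (1-q)^n 2^{-6n-4}/5$ is a classical sampler for $P_{C,q}$ at that accuracy; \cref{Lemma:Hardness_Exponential} shows no such polynomial-time algorithm exists unless the polynomial hierarchy collapses to the third level, which is the assertion. (For completeness, efficient quantum preparation of $\rho(H_C,\beta)$ follows from \cref{Theorem:Efficient_Gibbs_Prep}, since $C$ has constant depth and locality and $\beta = O(1)$.)

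I expect the locality bookkeeping of the second paragraph to be the main obstacle: one has to use the specific 3D layout and the exact $\pi/4$ gate angles of the construction behind \cref{Lemma:Hardness_Exponential} to see that the conjugated operators collapse onto only $5$ qubits instead of the larger (though still $O(1)$) support the generic bound would allow. Everything else is a direct composition of the two lemmas together with a suitable constant choice of $\beta$.
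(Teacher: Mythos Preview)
Your proof is correct and follows exactly the paper's route: build the parent Hamiltonian $H_C = C\HNI C^\dagger$ for the Fujii--Morimae circuits, invoke \cref{Lemma:Gibbs_to_Circuit_Sampling_Main} to identify the Gibbs distribution with $P_{C,q}$, and pick a constant $\beta$ so that $q = e^{-\beta}/(1+e^{-\beta}) < 0.134$. Your locality bookkeeping is more intricate than needed --- the paper simply notes that since the diagonal part of $C$ consists of mutually commuting gates, $C Z_i C^\dagger$ depends only on the (at most four) two-qubit gates that actually touch qubit $i$, giving $5$-locality without any appeal to the specific $\pi/4$ angle.
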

	\begin{proof}
		We choose a hard-to-sample IQP circuit $C$ from \cref{Lemma:Hardness_Exponential} and specify the new Hamiltonian as $H_C=C\HNI C^\dagger$. By \cref{Lemma:Gibbs_to_Circuit_Sampling_Main}, the output distribution of the Gibbs states of $H_C$ is exactly $P_{C,q}$
		Choosing:
		\begin{align} \label{Eq:Hardness_Condition}
			q = \frac{e^{-\beta}}{1+e^{-\beta}} \leq 0.134
		\end{align}
		and we see that the output distribution must be hard to sample from, as per \cref{Lemma:Hardness_Exponential}.
		Thus we see that for $\beta \gtrsim 1.87 =  O(1)$, it is hard to sample this distribution. 
  
    To see that the locality of $H_C$ is $5$, we note that for each qubit $i$, there are at most $4$ gates acting on qubit $i$ in $C$, each of which introduces interaction with at most $4$ other qubits. Thus, when conjugating the $Z_i$ term of $H_{NI}$ with the gates of $C$, it becomes an at most $5$-local term.
	\end{proof}

      Notably, in order to prove hardness of sampling with inverse exponential error from a noisy IQP circuit, a postselection argument suffices, rather than a full fault-tolerant encoding.  
      Thus, there is no added overhead needed to give hardness of sampling, which ensures that the locality of the parent Hamiltonian is purely determined by the logical IQP circuit structure, which has low depth and locality. This is not true for the circuits and parent Hamiltonians in the next section, or in Ref. \cite{bergamaschi2024sample} where additional elements need to be added to the circuit to allow for error detection.

\subsection{Hardness of Gibbs Sampling to \textit{O}(1/poly(\textit{N})) Error}
\label{Sec:Hardness_Polynomial}

 While the hardness of sampling from noiseless IQP circuits with inverse exponential error can be obtained through a postselection argument, proving hardness of sampling with additive error typically requires a few more ingredients (e.g. anticoncentration, average-to-worst case reduction). Here, we use existing results from \citeauthor{Hangleiter_2018} \cite{Hangleiter_2018} which construct a family of constant-depth IQP circuits that exhibit such hardness (up to some complexity-theoretic conjectures).
\begin{lemma} (Corollary 12 of \cite{Hangleiter_2018}) \label{Lemma:HangleiterCircuit}
There exists a family of constant depth IQP circuits $\{C_n\}_n\geq 1$ on a 2D square lattice of $n$ qubits, such that no randomised classical polynomial-time algorithm can sample from the output distribution of $C_n$ up to additive $1$ error of $\delta=1/192$, assuming the average-case hardness of computing a fixed family of partition functions, and the non-collapse of the polynomial hierarchy.
\end{lemma}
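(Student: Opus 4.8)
The plan is to follow the now-standard template for proving additive-error hardness of sampling from near-Clifford circuit families (Bremner--Montanaro--Shepherd; Aaronson--Arkhipov), instantiated for the constant-depth, geometrically local IQP family. Two ingredients must be supplied: (i) a worst-case statement that exactly computing the output probabilities $P_{C_n}(x)$ is $\#\textsf{P}$-hard, together with the \emph{conjecture} that an appropriate approximate version stays hard on average; and (ii) \emph{anticoncentration}, namely that over a suitable random choice of gate angles $P_{C_n}(0^n) = \Omega(2^{-n})$ with constant probability. First I would fix the circuit family: take the depth-$O(1)$ IQP circuits on the 2D square lattice that, after a final layer of Hadamards, prepare a graph state on the lattice dressed by single-qubit diagonal rotations $e^{i\theta_v Z}$ with each $\theta_v$ drawn from a fixed finite gate set. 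Using the standard Ising representation of IQP amplitudes, $\bra{0^n}C_n\ket{0^n}$ equals, up to an explicit normalisation, the partition function $Z$ of an Ising model on the lattice whose couplings and fields are determined by the $\theta_v$; squaring gives $P_{C_n}(0^n) = 2^{-n}\lvert Z\rvert^2$ up to constants. This is the ``fixed family of partition functions'' referred to in the statement.

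For worst-case hardness I would argue that with an appropriate choice of angles the postselected version of this family is universal, so that $\PostBQP = \PP$ and hence the exact probabilities $P_{C_n}(x)$ are $\#\textsf{P}$-hard to compute --- or, equivalently, invoke the known $\#\textsf{P}$-hardness of the associated Ising partition function with these parameters. The average-case version --- that approximating a $(1/4 + o(1))$-fraction of these partition functions to relative error $1/4 + o(1)$ remains $\#\textsf{P}$-hard --- is taken as a conjecture, exactly as phrased in the statement; nothing further is proved about it here.

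The crux, and the reason the result is nontrivial for \emph{constant-depth} circuits, is anticoncentration. I would bound the first two moments over the random angles: $\mathbb{E}_C[P_C(0^n)] = 2^{-n}$ by unitarity, and $\mathbb{E}_C[P_C(0^n)^2] \le c\,2^{-2n}$ for an absolute constant $c$. Then the Paley--Zygmund inequality gives $\Pr_C[P_C(0^n) \ge \tfrac12 2^{-n}] \ge (2c)^{-1}$, i.e.\ anticoncentration with explicit constants. The second-moment estimate is the hard part: unlike full, non-geometrically-local IQP, where the random-phase ensemble is an exact $2$-design on the diagonal and the collision probability is immediate, here one must show the shallow, local ensemble still scrambles enough --- concretely, one expands $\mathbb{E}_C[P_C(0^n)^2]$ as a sum over pairs of Pauli paths (equivalently domain-wall configurations) on the lattice graph state and shows the geometry forces all but an $O(2^{-2n})$ contribution to cancel. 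I expect essentially all of the real work to live in this step.

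Finally, suppose a randomised classical polynomial-time algorithm $\mathcal{A}$ outputs samples from $Q$ with $\norm{Q - P_{C_n}}_1 \le \delta = 1/192$. Stockmeyer's approximate counting theorem puts a multiplicative $(1 \pm 1/\poly(n))$-approximation of $Q(0^n)$ into $\textsf{FBPP}^{\textsf{NP}}$, i.e.\ the third level of the polynomial hierarchy. Markov's inequality applied to the $\ell_1$ bound shows that for all but a $\delta/\eta$-fraction of circuits $\lvert Q(0^n) - P_{C_n}(0^n)\rvert \le \eta\,2^{-n}$; intersecting this with the anticoncentration event and choosing $\eta$ a suitable constant multiple of the anticoncentration constant makes $Q(0^n)$ a relative-error-$(1/4 + o(1))$ approximation of $P_{C_n}(0^n)$, hence of the corresponding partition function, on a $(1/4 + o(1))$-fraction of instances. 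This contradicts the average-case hardness conjecture unless $\#\textsf{P} \subseteq \textsf{FBPP}^{\textsf{NP}}$, which by Toda's theorem collapses the polynomial hierarchy to its third level. Tracking the constants through Paley--Zygmund, Markov, and Stockmeyer is exactly what pins the tolerable error to $\delta = 1/192$.
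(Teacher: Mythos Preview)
The paper does not give its own proof of this lemma: it is stated as an imported result, explicitly labelled ``Corollary 12 of \cite{Hangleiter_2018}'', and is used as a black box in the proof of \cref{Theorem:Constant_Precision_Hardness}. There is therefore nothing in this paper to compare your proposal against.

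That said, your sketch is a faithful reconstruction of how the cited reference actually establishes the result. The architecture is indeed the constant-depth 2D IQP/cluster-state family, the output probabilities are identified with Ising partition functions (which supplies both the worst-case $\#\textsf{P}$-hardness and the statement of the average-case conjecture), and the Stockmeyer-plus-Markov reduction is the standard one from Bremner--Montanaro--Shepherd. You are also right that the substantive new content in \cite{Hangleiter_2018} is precisely the anticoncentration step: a second-moment bound $\mathbb{E}_C[P_C(0^n)^2] \le c\,2^{-2n}$ for the \emph{shallow, geometrically local} ensemble, fed into Paley--Zygmund. Your description of that calculation as an expansion over Pauli-path/domain-wall configurations on the lattice is schematic but points in the right direction; in the reference the bound is obtained by showing the ensemble forms an approximate $2$-design on the relevant diagonal subspace, and the constant $c$ one extracts there, propagated through Markov and the Stockmeyer tolerances, is what fixes $\delta = 1/192$. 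None of this is reproved in the present paper.
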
    

Now we would like to show that the hardness of sampling from this circuit is robust to input noise, and without postselection (as this would make it more difficult to show hardness of additive error sampling). 
This task has been explored in Refs. \cite{Bremner_2017, bergamaschi2024sample} for IQP circuits, and they each provide a method of encoding an arbitrary IQP circuit into a larger, noise-robust circuit. 
Here, we describe an encoding method that is heavily inspired by these techniques.

Our construction is as follows. Suppose the initial circuit acts on $n$ qubits. 
Fix some positive integer $r$. 
For each `logical' qubit $i$ in the circuit of \cref{Lemma:HangleiterCircuit} (initialised in the $\ket{0}$ state), initialise a block of $r$ `physical' qubits in the $\ket{0}$ state, and label them $i_1,\ldots,i_r$. 
For each block $i$, apply $r-1$ CNOT gates which are all controlled on $i_1$, and targeting a qubit in $i_2,\ldots,i_r$. 
Let the unitary corresponding to this CNOT network be $B$. Then, apply the logical circuit (from \cref{Lemma:HangleiterCircuit}) amongst the first qubits of each block (qubits labelled $1_1,2_1,\ldots, n_1$). 
Finally, measure all qubits in the computational basis. See \cref{fig:IQP_Encoding} for an example of such a circuit with $r=2$.

\begin{figure}[h!] 
    \centering
    \includegraphics[width=1\textwidth]{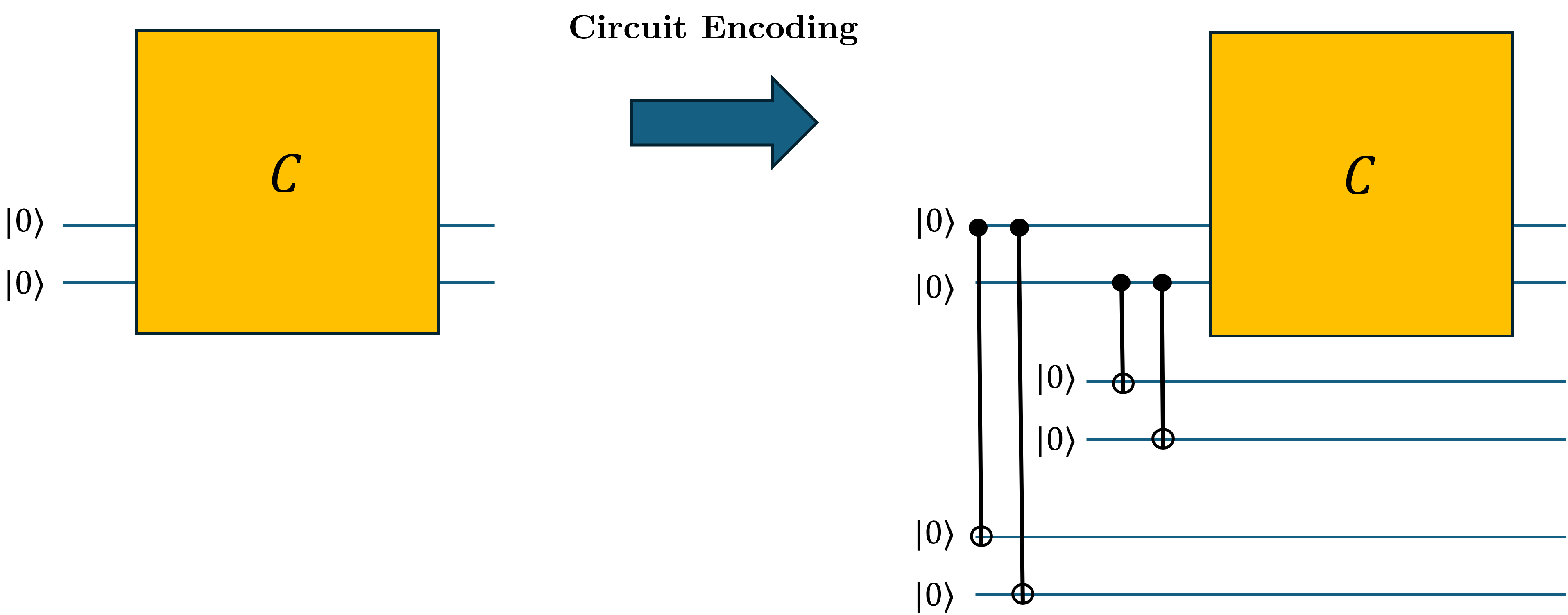}
    \caption{The encoding of the IQP circuit $C$ with $r=3$. For clarity, only two input qubits are shown explicitly, but the procedure applies to all qubits present.}
    \label{fig:IQP_Encoding}
\end{figure}

We see that if the initial logical qubit is flipped by noise at the start of the circuit, the CNOT network associated with that qubit should flip all the associated ancillary qubits to the $\ket{1}$ state, thus flagging the error which can be corrected in postprocessing.
We require $r$ auxiliary qubits to flag the error as the auxiliary qubits themselves may have noise incident on them, and so the multiple copies acts a repetition code to suppress this case.

We show in \cref{Appendix:CNOT_Encoding} that with classical post-processing, one can recover the logical output distribution with an error rate that is exponentially suppressed in $r$. 
Crucially, we only require $r=O(1)$ which means the parent Hamiltonian remains $O(1)$-local. 
Our proof works by examining the propagation of Pauli's through the CNOT network.

\begin{lemma} \label{Lemma:CNOT_Encoding_Main}
Let $C$ be an arbitrary IQP circuit constructed with $2$-qubit gates of depth $d$ on $n$ qubits. Then, for any integer parameter $r \geq 1$, there is an encoded IQP circuit $C^*$ constructed with $2$-qubit gates of depth $d+r$ on $nr$ qubits, and a decoding algorithm $A^*$ such that,
\begin{align}
    A^*(P_{C^*,q}) = P_{C,\pfail}
\end{align}
where $\pfail \leq (4q(1-q))^{r/2}$. Furthermore, the parent Hamiltonian of $C^*$ has locality $k \leq d+2$ and degree $\Delta \leq r(d+1)$
\end{lemma}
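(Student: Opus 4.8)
The plan is to analyze the encoded circuit $C^*$ by tracking how the bit-flip noise on the $nr$ input qubits propagates through the CNOT network $B$ and then through the logical IQP circuit $C$, and to exhibit a decoding algorithm $A^*$ that post-processes the $nr$ measurement outcomes into $n$ logical bits. First I would set up the structure: each logical qubit $i$ corresponds to a block $i_1,\dots,i_r$; the CNOT network $B$ copies the value of $i_1$ onto $i_2,\dots,i_r$ (in the computational basis). Since $C$ is an IQP circuit, it is diagonal in the $X$ basis, so it is natural to conjugate through: input bit-flips are $X$-errors, and I want to understand their effect on the output distribution $P_{C^*,q}$ relative to $P_C$. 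The key observation is that an $X$ error on $i_1$ \emph{before} $B$ is equivalent (pushing $X$ through the CNOTs controlled on $i_1$) to $X$ errors on all of $i_1,\dots,i_r$ after $B$, i.e.\ it flips the entire readout block; whereas an $X$ error on $i_j$ for $j\ge 2$ before $B$ commutes past $B$ and only flips that single output bit $i_j$. An $X$ error on $i_1$ occurring \emph{after} $B$ but before $C$ acts as a logical $X$ on qubit $i$ of $C$ — but since we only put noise on the input, this arises only from the propagated piece above.

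The decoding algorithm $A^*$ on a received block $b_{i_1},\dots,b_{i_r}$ takes a majority vote to produce the logical bit for qubit $i$. I would then bound $\pfail$, the probability that the logical output is corrupted, i.e.\ that the decoded string differs from a sample of $P_C$. The crucial case analysis: a logical error on qubit $i$ happens essentially when the ``true'' value carried by $i_1$ through the circuit is flipped and the majority vote also gets it wrong. Writing the per-qubit noise events: $i_1$ is flipped with probability $q$ (call this event $E$), and each $i_j$, $j\ge 2$, independently flipped with probability $q$. If $E$ occurs, the logical input to $C$ at position $i$ is flipped, \emph{and} all of $b_{i_1},\dots,b_{i_r}$ receive an extra flip from propagation, so the block reads (flipped logical value) XOR (independent noise); for the majority vote to \emph{fail to detect/correct}, we'd need the independent noise to have flipped a majority back — but actually the cleaner framing is: the decoded logical bit equals the true logical bit (the $C$-output with no input error at $i$) if and only if majority of the $r$ received bits agrees with that true value. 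One works out that this fails only when roughly $\ge r/2$ of the $r$ relevant noise coincidences line up, and a direct computation (Chernoff-type / binomial tail, or the generating-function identity $\sum_{k\ge r/2}\binom{r}{k}q^k(1-q)^{r-k} \le (4q(1-q))^{r/2}$) gives $\pfail \le (4q(1-q))^{r/2}$. I would present this binomial tail bound as the mechanism rather than grinding the constants.

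For the resource claims: $C^*$ acts on $nr$ qubits; its depth is the depth $d$ of $C$ plus the depth of $B$. The CNOT network in block $i$ has $r-1$ CNOTs all controlled on $i_1$ — naively depth $r-1$, hence total depth $d + r - 1 \le d+r$ (or one notes the blocks act in parallel). For the parent Hamiltonian $H_{C^*} = C^* \HNI (C^*)^\dagger$: conjugating a single $Z_j$ through $C^*$, the support grows by the lightcone. A qubit $j = i_1$ passes through $r-1$ CNOTs (spreading to its block, but those are the \emph{same} $Z$-type... actually $Z$ on control of CNOT spreads to $Z\otimes Z$... need care) and then through the depth-$d$ two-qubit circuit $C$; a qubit $j = i_\ell$ with $\ell\ge 2$ only passes through one CNOT (as target, $Z$ on target of CNOT stays $Z$ on target) and does not enter $C$. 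The locality bound $k \le d+2$ comes from: the support of the conjugated $Z$ under a depth-$d$ two-qubit circuit is at most $2^d$-ish in general, but for this specific lightcone structure one gets a linear-in-$d$ bound $d+2$ — I would verify the precise combinatorics by counting how the CNOT layer contributes $+1$ and the IQP layers (being diagonal, $ZZ$-type) each contribute at most $+1$ to the support of a $Z$-string, actually $Z$ conjugated by $e^{iZZ\theta}$ is still $Z$, so \emph{the IQP gates do not grow $Z$-support at all} — hence the locality is governed entirely by the CNOT network: $Z$ on $i_1$ conjugated by the $r-1$ CNOTs controlled on $i_1$ gives $Z_{i_1}Z_{i_2}\cdots Z_{i_r}$, which is $r$-local, giving $k \le r$; I'd reconcile this with the stated $k \le d+2$ (which must account for the CNOTs being applied in some layered order and $C$'s gates not all being purely $e^{iZZ\theta}$, or for a different gate-level accounting). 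The degree bound $\Delta \le r(d+1)$ counts, for a fixed qubit, how many of the conjugated terms $\{C^* Z_j (C^*)^\dagger\}_j$ touch it: it is hit by terms from its own block and from logically-adjacent blocks, bounded by $r$ blocks times $d+1$ terms each.

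The main obstacle I expect is the noise-propagation bookkeeping for $\pfail$: carefully separating (i) the input flip on $i_1$ that becomes a \emph{logical} error feeding into $C$ from (ii) the input flips that are merely \emph{readout} errors on the output block, and correctly arguing that the majority-vote decoder succeeds unless a binomial-tail-sized coincidence occurs — in particular making sure the propagated flip and the independent flip on $i_1$ itself combine in the right way, and that $C$ being IQP (diagonal in $X$) is what lets us treat a logical $X$ on its input cleanly (it just relabels the output distribution $P_C$ by that bit flip, or by linearity mixes two shifted copies). Getting the constant in the exponent to match $(4q(1-q))^{r/2}$ exactly, rather than a weaker $\exp(-\Omega(r))$, requires the standard tight binomial-tail estimate, which I would cite rather than rederive.
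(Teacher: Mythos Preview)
Your noise-propagation analysis for the $\pfail$ bound is in the right spirit, though the paper takes a shorter route: it defines $C^* = C_1 B$, observes that $B^\dagger$ is a classical reversible map that can be applied to the measurement outcomes, and notes that $B^\dagger C_1 B$ equals the encoded circuit $C_{enc}$ of \cite{Bremner_2017} (the $2r$-local X-program encoding). Hence $A^*$ is just ``apply $B^\dagger$ classically, then run the majority-vote decoder $A$ from \cite{Bremner_2017}'', and the bound $\pfail \le (4q(1-q))^{r/2}$ is inherited directly. Your direct majority vote on the raw block $(b_{i_1},\dots,b_{i_r})$ does not quite work: the raw outputs are $(c_i\oplus e_1,\ e_1\oplus e_2,\dots,\ e_1\oplus e_r)$, which do not encode $r$ noisy copies of $c_i$ until you first XOR each ancilla bit with $b_{i_1}$ --- that XOR step is precisely the classical $B^\dagger$.

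The substantive gap is in your locality computation, where two Pauli-propagation rules are inverted. First, $\mathrm{CNOT}_{x,y}\,Z_x\,\mathrm{CNOT}_{x,y}=Z_x$ while $\mathrm{CNOT}_{x,y}\,Z_y\,\mathrm{CNOT}_{x,y}=Z_xZ_y$: $Z$ on the \emph{control} is fixed and $Z$ on the \emph{target} spreads, the opposite of what you wrote. Consequently $B\,Z_{i_1}\,B^\dagger = Z_{i_1}$ and $B\,Z_{i_j}\,B^\dagger = Z_{i_1}Z_{i_j}$ for $j\ge 2$, so after the CNOT conjugation every term of $B\HNI B^\dagger$ is at most $2$-local --- not $r$-local. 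Second, your claim that ``IQP gates do not grow $Z$-support'' ignores the outer Hadamard layers: $C_1 = H^{\otimes}\,D\,H^{\otimes}$, so $C_1 Z_{i_1} C_1^\dagger = H^{\otimes}\,D\,X_{i_1}\,D^\dagger\,H^{\otimes}$, and $X_{i_1}$ fails to commute with every diagonal gate in $D$ touching $i_1$. There are at most $d$ such gates (one per layer), each introducing one new qubit, so $C_1 Z_{i_1} C_1^\dagger$ is supported on at most $d{+}1$ qubits. Combining, $C_1 B\,Z_{i_j}\,B^\dagger C_1^\dagger = (C_1 Z_{i_1} C_1^\dagger)\,Z_{i_j}$ is $(d{+}2)$-local, which is exactly the stated bound. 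Your tentative conclusion $k\le r$ would instead give $O(\log n)$-locality when $r=O(\log n)$ and would lose the whole point of the lemma.
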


Thus, given an initial circuit $C$, obtain a new circuit composed of an initial set of CNOT layers, followed by an IQP circuit.
The output of the error-free circuit can be extracted from classical post-processing.
We can put this all together to obtain the following theorem. 

\begin{theorem}\label{Theorem:Constant_Precision_Hardness}
    One can efficiently construct a family of $6$-local Hamiltonians, $\{ H_C\}_C$, whose Gibbs states at $\beta = O(1)$ are hard to classically sample from with constant $1$-error up to $1/192$, under the assumptions of \cref{Lemma:HangleiterCircuit}.
\end{theorem}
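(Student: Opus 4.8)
The plan is to funnel the hard-to-sample IQP circuits of \cref{Lemma:HangleiterCircuit} through the noise-detecting encoding of \cref{Lemma:CNOT_Encoding_Main}, turn the encoded circuit into a parent Hamiltonian via \cref{Lemma:Gibbs_to_Circuit_Sampling_Main}, and argue that any classical sampler for the resulting Gibbs state would, after the efficient classical decoder $A^\star$, give a classical sampler for the original output distribution $P_{C_n}$ within additive $1$-error below $1/192$ --- which \cref{Lemma:HangleiterCircuit} forbids (granting its average-case conjecture) unless the polynomial hierarchy collapses.

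First I would take the family $\{C_n\}_{n\ge 1}$ of constant-depth nearest-neighbour IQP circuits on the $2$D square lattice from \cref{Lemma:HangleiterCircuit}. For a parameter $r=r(n)$ to be chosen, \cref{Lemma:CNOT_Encoding_Main} produces an encoded circuit $C_n^\star$ on $nr$ qubits of depth $d+r$, an efficient classical decoder $A_n^\star$ with $A_n^\star(P_{C_n^\star,q})=P_{C_n,\pfail}$ where $\pfail\le\bigl(4q(1-q)\bigr)^{r/2}$, and a parent Hamiltonian $H_{C_n^\star}=C_n^\star\,\HNI\,(C_n^\star)^\dagger$ (with $\HNI$ on all $nr$ qubits) of locality at most $6$: writing $C_n^\star=U_{\mathrm{IQP}}B$ with $B$ the CNOT flag network and $U_{\mathrm{IQP}}$ supported on the first qubit of each block, $BZ_{i_1}B^\dagger=Z_{i_1}$ spreads under $U_{\mathrm{IQP}}$ only to $i_1$ and its $\le 4$ lattice-neighbouring block leaders, while $BZ_{i_k}B^\dagger=Z_{i_1}Z_{i_k}$ for $k\ge 2$ then spreads only through $i_1$, so each conjugated term touches at most $1+4+1=6$ qubits (this is the bound $k\le d+2$ of \cref{Lemma:CNOT_Encoding_Main} with $d=4$). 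By \cref{Lemma:Gibbs_to_Circuit_Sampling_Main}, for any fixed $\beta=O(1)$ the Gibbs state $\rho(H_{C_n^\star},\beta)$ equals $C_n^\star\bigl(\cD_q^{\ox nr}(\ket{0}\bra{0})\bigr)(C_n^\star)^\dagger$ with $q=e^{-\beta}/(1+e^{-\beta})\in(0,\tfrac12)$, so sampling computational-basis strings from this Gibbs state is the same as sampling from $P_{C_n^\star,q}$.

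Next I would set $r(n)=c\log n$ with $c$ a constant depending only on $q$, chosen large enough that $n\,\pfail\le\delta_0(n)$ for some $\delta_0(n)\to 0$ (possible since $4q(1-q)<1$); then $nr=\poly(n)$, the locality stays $6$, and the degree is $\Delta\le r(d+1)=O(\log n)$. Since $\cD_q^{\ox n}(\ket{0}\bra{0})$ on the logical register differs from $\bigl(\ket{0}\bra{0}\bigr)^{\ox n}$ in trace distance by at most $1-(1-\pfail)^n\le n\,\pfail$, and measurement after a fixed circuit is a channel, $\norm{P_{C_n,\pfail}-P_{C_n}}_1\le 2n\,\pfail\le 2\delta_0(n)$. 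Now suppose a randomised classical polynomial-time algorithm outputs samples from a distribution $Q$ with $\norm{Q-P_{C_n^\star,q}}_1\le\delta'$ for some constant $\delta'<1/192$. Composing it with $A_n^\star$ and invoking the data-processing inequality for $\norm{\cdot}_1$ gives a randomised classical polynomial-time sampler for $A_n^\star(Q)$ with $\norm{A_n^\star(Q)-P_{C_n}}_1\le\delta'+2\delta_0(n)$, which is $<1/192$ for all large $n$ --- contradicting \cref{Lemma:HangleiterCircuit} (under its standing average-case conjecture) unless the polynomial hierarchy collapses to the third level. Hence the Gibbs states of the $6$-local family $\{H_{C_n^\star}\}$ are hard to classically sample from for every constant $1$-error below $1/192$; and since their locality is $O(1)$, their depth $d+r=O(\log n)$ makes $2^{d+r}=\poly(n)$, and $e^\beta=O(1)$, \cref{Theorem:Efficient_Gibbs_Prep} shows a quantum computer prepares (hence samples) them in polynomial time.

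The real difficulty lives in \cref{Lemma:CNOT_Encoding_Main} itself --- showing that pushing input $X$-errors through the block CNOT network and the commuting IQP layers leaves an undetected logical error probability of only $(4q(1-q))^{r/2}$ while the flag structure adds just a single qubit to each ancilla Hamiltonian term. Granting that lemma, the only delicate steps above are quantitative: picking $r=\Theta(\log n)$ so that the residual logical noise $\pfail$ is small enough to preserve the $1/192$ threshold while keeping $k=6$ and $\Delta=O(\log n)$, and checking that the Gibbs-to-circuit identification and the decoding reduction compose while losing only $o(1)$ in $1$-norm.
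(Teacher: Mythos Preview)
Your proposal is correct and follows essentially the same route as the paper: start from the constant-depth Hangleiter--Bermejo-Vega--Schwarz--Eisert IQP family of \cref{Lemma:HangleiterCircuit}, apply the CNOT-block encoding of \cref{Lemma:CNOT_Encoding_Main} with $r=\Theta(\log n)$, pass to the parent Hamiltonian via \cref{Lemma:Gibbs_to_Circuit_Sampling_Main}, and reduce any classical Gibbs sampler through the decoder $A^\star$ to a sampler for $P_{C_n}$ within the $1/192$ threshold. You actually spell out more than the paper does --- the explicit $k\le d+2=6$ locality count, the $\norm{P_{C_n,\pfail}-P_{C_n}}_1\le 2n\,\pfail$ bound, and the data-processing step --- and you fold in the quantum preparability statement that the paper separates into its own theorem, but the structure is the same.
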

\begin{proof}
    We choose a hard-to-sample IQP circuit $C$ from \cref{Lemma:HangleiterCircuit}. 
    Then, we encode $C$ into $C^*$ as in \cref{Lemma:CNOT_Encoding_Main} (with parameter $r$ to be set later), and specify the new Hamiltonian as $H_{C^*}= C^*\HNI C^{*\dagger}$. 
    By \cref{Lemma:Gibbs_to_Circuit_Sampling_Main}, the output distribution of the Gibbs states of $H_{C^*}$ at constant temperature is exactly $P_{C^*,q}$. 
    Using $A^*$, we can postprocess this to obtain $P_{C,\pfail}$ where $\pfail \leq (4q(1-q))^{r/2}$.
    Therefore, it suffices to choose  $r = O(\log n)$ in order to set the probability of an error happening on any qubit to be $O(1/n)$. By increasing $r$, we can set the approximation error arbitrarily lower than $1/192$. Thus, it is hard to sample from the Gibbs states of $H_{C^*}$ with constant $1$ error up to $\delta = 1/192$.
\end{proof}

Additionally, in \cref{Appendix:CNOT_Encoding} we show that the output distributions of this work and previous noise-robust IQP encodings (specifically those presented in here and in Refs. \cite{Bremner_2017, bergamaschi2024sample}) are equivalent, up to reversible classical postprocessing. 
This highlights the fact that improvements in Hamiltonian locality have not been obtained by considering very different output distributions, but rather, different Hamiltonian constructions.

\subsection{Quantum Advantage from Gibbs Sampling of \textit{O}(1)-Local Hamiltonians}

So far we have proven hardness of sampling from Gibbs states for the families of parent Hamiltonians in \cref{Sec:Hardness_Exponential} and \cref{Sec:Hardness_Polynomial}.
Here we employ the Gibbs state preparation algorithm of Ref. \cite{chen2023quantum} in conjunction with \cref{Theorem:Efficient_Gibbs_Prep}.

\begin{theorem}[Quantum Advantage with $O(1)$-Local Hamiltonians]
    The Gibbs states associated with the families of parent Hamiltonians in \cref{Theorem:4Local} and \cref{Theorem:Constant_Precision_Hardness} can be sampled from efficiently using a quantum computer in the same parameter regimes for which they are classically hard.
\end{theorem}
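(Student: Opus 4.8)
The plan is to show that the parent Hamiltonians appearing in \cref{Theorem:4Local} and \cref{Theorem:Constant_Precision_Hardness} satisfy the hypotheses of \cref{Theorem:Efficient_Gibbs_Prep}, so that the quoted preparation algorithm runs in polynomial time in both parameter regimes. First I would recall that in each case the Hamiltonian has the form $H_C = C\,\HNI\,C^\dagger$, where $C$ is an explicitly given low-depth circuit: for $\mathcal{F}_1$, $C$ is the depth-$5$ IQP circuit of \cref{Lemma:Hardness_Exponential} (one layer of $e^{iZ\pi/8}$ and four layers of nearest-neighbour $e^{iZZ\pi/4}$), and for $\mathcal{F}_2$, $C = C^*$ is the encoded IQP circuit of \cref{Lemma:CNOT_Encoding_Main}, which has depth $d + r$ with $d = O(1)$ (constant-depth from \cref{Lemma:HangleiterCircuit}) and $r = O(\log n)$. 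In the first case the locality $\ell = 5$ and depth $d = 5$ are both constants, and $\beta = O(1)$ is constant, so the runtime bound $O(2^{4\ell} 2^d e^\beta n\,\poly(\log(n/\epsilon),\ell,\beta))$ from \cref{Theorem:Efficient_Gibbs_Prep} is simply $O(n\,\polylog(n/\epsilon))$ — manifestly efficient. I would also invoke \cref{Lemma:Gibbs_to_Circuit_Sampling_Main}, which identifies the Gibbs state at inverse temperature $\beta$ with $C(\cD_q^{\ox n}(\ket{0}\bra{0}))C^\dagger$ for $q = e^{-\beta}/(1+e^{-\beta})$; since for $\beta \gtrsim 1.87$ we have $q < 0.134$, the prepared state is exactly the one whose sampling distribution is hard, so the quantum algorithm indeed operates in the classically hard regime.

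For the second family, the only subtlety is the $r = O(\log n)$ dependence entering the circuit depth. Here I would carefully track the factors in the runtime bound: $\ell \le d + 2 = O(1)$ stays constant (the key point established in \cref{Lemma:CNOT_Encoding_Main}), so $2^{4\ell} = O(1)$; the depth of $C^*$ is $d + r = O(\log n)$, so $2^{d+r} = 2^{O(\log n)} = \poly(n)$; and $e^\beta = O(1)$ since $\beta = O(1)$. Multiplying through, the preparation time is $\poly(n)\cdot n\cdot\poly(\log(n/\epsilon)) = \poly(n/\epsilon)$, still efficient. Again by \cref{Lemma:Gibbs_to_Circuit_Sampling_Main} the prepared state is $C^*(\cD_q^{\ox n}(\ket{0}\bra{0}))C^{*\dagger}$, which is precisely $P_{C^*,q}$, the distribution shown hard in \cref{Theorem:Constant_Precision_Hardness} for the same constant $\beta$; so the quantum sampler is efficient and operates in the hard regime. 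One should also note that \cref{Theorem:Efficient_Gibbs_Prep} prepares the Gibbs state to trace distance $\epsilon$, which translates to $\epsilon$ in $1$-norm on the output sampling distribution, consistent with the error tolerances ($\exp(-n)$ and $1/192$ respectively) appearing in the hardness statements — for $\mathcal{F}_1$ one takes $\epsilon = \exp(-\Theta(n))$, which only costs an additional $\poly(n)$ factor from the $\log(1/\epsilon)$ term.

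The main obstacle — really the only one — is bookkeeping the $r = O(\log n)$ scaling in the exponent of the runtime for $\mathcal{F}_2$ and confirming it stays polynomial; this hinges on the fact, already guaranteed by \cref{Lemma:CNOT_Encoding_Main}, that the encoding keeps the \emph{locality} constant even as the \emph{depth} grows logarithmically, so the expensive $2^{4\ell}$ factor does not blow up and only the benign $2^d$ factor is affected. I would also remark briefly that \cref{Theorem:Efficient_Gibbs_Prep} as quoted is for general parent Hamiltonians of circuits, and the additional structure cited from Ref. \cite{chen2023quantum} can be used to streamline the preparation, but is not strictly needed for the polynomial-time claim. Assembling these observations gives the theorem: in both regimes a quantum computer prepares the relevant Gibbs state in polynomial time and samples from a distribution within the stated error, while no classical algorithm can, establishing the quantum advantage.
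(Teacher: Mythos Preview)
Your proposal is correct and follows essentially the same approach as the paper: invoke \cref{Theorem:Efficient_Gibbs_Prep} with the parameters of each family and verify the runtime is polynomial. If anything, your treatment is more careful than the paper's own proof, which simply asserts $\ell = O(1)$ and $\beta = \Theta(1)$ yield a $\poly(n)$ runtime without explicitly tracking the $2^d$ factor; your observation that for $\mathcal{F}_2$ the circuit depth is $d + r = O(\log n)$ (so $2^d = \poly(n)$) while the locality stays $O(1)$ is exactly the bookkeeping needed to make that step rigorous.
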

\begin{proof}
    For both of the families of Hamiltonians in \cref{Theorem:4Local} and \cref{Theorem:Constant_Precision_Hardness} the Hamiltonians are $O(1)$-local, hence the runtime of the Gibbs state preparation algorithm in \cref{Theorem:Efficient_Gibbs_Prep} with $\beta=\Theta(1)$, $\ell=O(1)$ and 
 $\epsilon=\frac{1}{9}(1+e^{-\beta})^{-n}2^{-6n-4}$ gives a total runtime of $O(2^{4\ell}2^d e^\beta n\poly(\log \frac{n}{\epsilon},\ell,\beta)) = O(\poly(n))$.
 Thus sampling from these distributions can be done in polynomial time for a quantum computer.
 
 On the other hand, as shown in \cref{Theorem:4Local} and \cref{Theorem:Constant_Precision_Hardness}, sampling from these Gibbs states in this parameter regime is classically intractable under complexity-theoretic conjectures.
\end{proof}

\section{Measurement Errors in Gibbs State Sampling}
\label{Sec:Measurement_Errors}

 When trying to show a quantum advantage in Gibbs state sampling, there are two sources of error: \textit{(a)} we can only approximately prepare the true Gibbs state, and \textit{(b)} when sampling, we expect our measurements to have an error associated with them.
	Here we show that, even with these two sources of error, we are sampling from a probability distribution that we do not expect to be able to sample from classically.
 \begin{theorem}
		Consider a family of IQP circuits $\mathcal{C}$, the associated parent Hamiltonians $\{H_C\}_{C\in\mathcal{C}}$ and an inverse temperature $\beta=\Theta(1)$. 
		There are a family of efficiently preparable states $\{\trho_C\}_{C\in \mathcal{C}}$ in time $\poly(n)$ on a quantum computer, such that for $\epsilon =\frac{1}{9} (1+e^{-\beta})^{-n}2^{-6n-4}$,
		\begin{align*}
			\norm{\trho_C - \rho(H_C,\beta)}_1\leq \epsilon
		\end{align*}
		and sampling from $\cD_q^{\otimes n}(\trho_C)$ (i.e. with imperfect measurements with single qubit measurement error $q = \Theta(1)$) is not possible using a randomised classical algorithm unless the polynomial hierarchy collapses to the 3rd level.
	\end{theorem}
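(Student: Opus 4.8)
The plan is to instantiate the statement with the concrete family underlying \cref{Lemma:Hardness_Exponential}: let $\mathcal{C}$ be the constant-depth, geometrically local Fujii--Morimae IQP family whose parent Hamiltonians $\{H_C\}$ are the $5$-local, degree-$5$, $3$D-lattice Hamiltonians of \cref{Theorem:4Local}. The proof then combines three ingredients: efficient approximate Gibbs-state preparation (\cref{Theorem:Efficient_Gibbs_Prep}), the Gibbs-state-to-noisy-circuit dictionary (\cref{Lemma:Gibbs_to_Circuit_Sampling_Main}), and the observation that a bit-flip (measurement-error) channel commutes through an IQP circuit. First I would take $\trho_C$ to be the output of the algorithm of \cref{Theorem:Efficient_Gibbs_Prep} run on $H_C$ at inverse temperature $\beta$ with target trace-distance error $\epsilon$. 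Since every $C \in \mathcal{C}$ has constant locality and constant depth, $\beta = \Theta(1)$, and $\log(1/\epsilon) = O(n)$, the quoted runtime $O(2^{4\ell}2^{d}e^{\beta}n\,\poly(\log\tfrac{n}{\epsilon},\ell,\beta))$ is $\poly(n)$, and by construction $\norm{\trho_C - \rho(H_C,\beta)}_1 \le \epsilon$; this disposes of the ``efficiently preparable'' and ``$\epsilon$-close'' clauses.

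For the hardness clause, write $q_\beta := e^{-\beta}/(1+e^{-\beta})$, so that $\rho(H_C,\beta) = C\,\cD_{q_\beta}^{\ox n}(\ket{0}\bra{0})\,C^\dagger$ by \cref{Lemma:Gibbs_to_Circuit_Sampling_Main}. The key step is that every IQP circuit $C$ is diagonal in the $X$ basis and therefore commutes with each single-qubit bit-flip $X_i$; consequently $\cD_q^{\ox n}(C\sigma C^\dagger) = C\,\cD_q^{\ox n}(\sigma)\,C^\dagger$ for every $\sigma$. Applying this to the Gibbs state gives $\cD_q^{\ox n}(\rho(H_C,\beta)) = C\,\cD_{q''}^{\ox n}(\ket{0}\bra{0})\,C^\dagger$ with $q'' := 1-(1-q)(1-q_\beta) = q+q_\beta-2qq_\beta$, so the computational-basis distribution of $\cD_q^{\ox n}(\rho(H_C,\beta))$ is exactly $P_{C,q''}$ --- the ideal Gibbs-sampling distribution of the \emph{same} Hamiltonian at the shifted, still-constant inverse temperature $\beta''$ fixed by $q'' = e^{-\beta''}/(1+e^{-\beta''})$. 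In short, imperfect readout is indistinguishable from sampling the perfect Gibbs state at a slightly higher temperature.

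It then remains to invoke \cref{Lemma:Hardness_Exponential} with noise parameter $q''$: provided $q'' < 0.134$, no randomised classical polynomial-time algorithm can sample within $L^1$-distance $(1-q'')^{n}2^{-6n-4}/5$ of $P_{C,q''}$ unless the polynomial hierarchy collapses. Because $q_\beta \to 0$ as $\beta \to \infty$, choosing $\beta$ a large enough constant leaves room for $q$ to be any sufficiently small positive constant with $q'' < 0.134$, so a measurement error of $q = \Theta(1)$ is indeed permitted. Finally, since $\cD_q^{\ox n}$ and the terminal projective measurement are trace-distance contractions, the distribution produced by sampling $\cD_q^{\ox n}(\trho_C)$ lies within $L^1$-distance $\epsilon$ of $P_{C,q''}$; as $\epsilon$ is inverse-exponentially small it sits below the \cref{Lemma:Hardness_Exponential} threshold (it suffices to take $\epsilon \le \tfrac{1}{5}(1-q'')^{n}2^{-6n-4}$, which is still $2^{-\Theta(n)}$ and hence still attained in $\poly(n)$ time by \cref{Theorem:Efficient_Gibbs_Prep}), so sampling from it is classically intractable as well.

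The step I expect to be the main obstacle is the combined-noise bookkeeping: one must verify that $q'' = q+q_\beta-2qq_\beta$ stays strictly below the Fujii--Morimae fault-tolerance threshold ($\approx 0.134$), which is what pins down how large $\beta$ and how small $q$ must be, and one must propagate the shifted hardness threshold $(1-q'')^{n}2^{-6n-4}/5$ through the triangle inequality, costing a harmless additional factor $(1-q)^{n} = 2^{-\Theta(n)}$ in the preparation precision relative to the $\epsilon$ written in the statement. The remaining pieces --- the commutation of $\cD_q^{\ox n}$ with IQP circuits, contractivity of quantum channels and measurements, and the applications of \cref{Lemma:Gibbs_to_Circuit_Sampling_Main,Theorem:Efficient_Gibbs_Prep,Lemma:Hardness_Exponential} --- are routine.
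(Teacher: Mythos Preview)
Your proposal is correct and follows essentially the same argument as the paper: efficient preparation via \cref{Theorem:Efficient_Gibbs_Prep}, CPTP contractivity to transfer the trace-distance bound through $\cD_q^{\ox n}$ and the terminal measurement, and the identification of $\cD_q^{\ox n}(\rho(H_C,\beta))$ with the Gibbs state at a shifted constant inverse temperature via composition of bit-flip channels (the paper uses the IQP commutation implicitly where you make it explicit), followed by the hardness of \cref{Theorem:4Local}/\cref{Lemma:Hardness_Exponential}. One minor slip to correct: the composition of two bit-flip channels with rates $q$ and $q_\beta$ has effective rate $q+q_\beta-2qq_\beta$ (the XOR formula, which you also write and which the paper uses), not $1-(1-q)(1-q_\beta)$ (the OR formula); nothing downstream depends on the erroneous expression, and your caveat about needing an extra $(1-q)^n$ in the preparation precision is well taken.
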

	\begin{proof}
		From \cref{Theorem:Efficient_Gibbs_Prep}, we see that preparing $\trho_C$ to this precision is possible in $O(\poly(n))$ time using a quantum computer. The distribution we are sampling corresponding to faulty measurements with probability $q$ on the prepared state $\trho$ approximating $\rho(H_C, \beta)$ is:
		\begin{align*}
			\tilde{P}(x) = \tr\left[ \ket{x}\bra{x}\cdot \cD_q^{\ox n}(\trho_C)  \right].
		\end{align*}
		We using H\"older's inequality:
		\begin{align*}
			\norm{\tilde{P}(x) - \tr\left[ \ket{x}\bra{x} \cD_q^{\ox n}(\rho(H_C,\beta))  \right]}_{1} 
			&\leq \norm{ \cD_q^{\ox n}(\trho_C) -   \cD_q^{\ox n}(\rho(H_C,\beta))}_1.
		\end{align*}
		Since $\norm{\trho_C - \rho(H_C,\beta)}_1 \leq \epsilon$, then we see:
		\begin{align*}
			\norm{ \cD_q^{\ox n}(\trho_C -   \rho(H_C,\beta))}_1 &\leq \norm{ \trho_C -   \rho(H_C,\beta)}_1 \\
			&\leq \epsilon,
		\end{align*}
		where we have used that $\cD^{\ox n}(\cdot)$ is a CPTP map.
		Thus we see that sampling from $\tilde{P}(x)$ approximates sampling from the Gibbs state $\cD^{\ox n}(\rho(H_C,\beta))$ with imperfect measurements.
  
		Now, we show that $\cD^{\ox n}(\rho(H_C,\beta))$ is also hard to classically sample from. 
        We know from \cref{Lemma:Gibbs_to_Circuit_Sampling_Main} that we can write Gibbs states of parent Hamiltonians as equivalent to the output of a noisy IQP circuit:
		\begin{align*}
			\cD^{\ox n}_q\left(\frac{e^{-\beta H_C}}{Z} \right)&= \cD^{\ox n}_{q} \cD^{\ox n}_{\frac{e^{-\beta}}{1+e^{-\beta}}}(C\ket{0}\bra{0}^{\otimes n}C^\dagger) \\
            &= \cD^{\ox n }_{q'}(C\ket{0}\bra{0}^{\otimes n}C^\dagger)\\
            &= \frac{e^{-\beta' H_C}}{Z} 
		\end{align*}
	  where we have defined $q' =(\frac{e^{-\beta}}{1+e^{-\beta}})(1-q) + q(1-\frac{e^{-\beta}}{1+e^{-\beta}})$ and $ q' = \frac{e^{-\beta'}}{1+e^{-\beta'}}$. 
    
		Provided we keep $\beta'$ sufficiently small (but still $\Theta(1)$), we see that sampling from $\rho(H_C,\beta')$ is classically intractable from \cref{Theorem:4Local}, hence $\cD^{\ox n}_q(\rho(H_C,\beta))$ is classically hard to sample too.
    In particular, we can choose values of $\beta, q = \Theta(1)$ simultaneously.
	\end{proof}

  \section{Heuristic Verification Procedure }
 \label{Sec:Verification}

 One the of the major limitations with many quantum supremacy experiments is that verifying that the desired experiment has actually been implemented --- and not ruined by noise or an adversary --- is not trivial \cite{hangleiter2019sample}.
 For example, in the case of Random Circuit Sampling and IQP sampling, cross-entropy benchmarking is often used as a proxy for measuring fidelity between the actual state and the ideal state \cite{boixo2018characterizing,hangleiter2024fault}. 
 However, although cross-entropy benchmarking only requires a polynomial number of samples, it requires computing samples of the ideal distribution which is exponentially expensive, and it appears to be spoofable for many classes of circuits.
 Thus verifying that one is sampling from a distribution close one taken from random circuits is highly non-trivial.

\paragraph{A Heuristic Verification Protocol} 
Suppose we wish to implement the Gibbs sampling procedure using the Hamiltonians in this work as a quantum supremacy test, then we have a similar problem --- how do we verify that we are sampling from the correct distribution?
 Here we suggest a heuristic test for correctness --- we ask to verify that the Gibbs state we are sampling from has the correct Hamiltonian.
 That is, suppose for part of our sampling routine, we wish to sample from the state $\rho(H_P,\beta)$, where $H_P$ is a Hamiltonian formed from a sum of Pauli operators and can be written as:
 \begin{align*}
     H_P = \sum_{i=1}^m \mu_i P_i
 \end{align*}
 where each $P_i$ is a $k$-local Pauli string. 
 To verify that we are correctly preparing Gibbs states, we take $N$ copies of the Gibbs state $\rho(H_P,\beta)$.
 We then use the Hamiltonian learning algorithm proposed in \cite[Theorem 6.1]{bakshi2023learning} to learn the coefficients of the Hamiltonian.
 In particular, if we want to learn the parameters to precision $(\tilde{\mu}_i-\mu_i)^2\leq \epsilon$ with probability $>1-\delta$, we need a number of sample $N$:
 \begin{align*}
    N= O\left( \frac{m^6}{\epsilon^{c}}+ \frac{c}{\beta^2\epsilon^2}\log\left(\frac{m}{\delta}\right)\right),
 \end{align*}
 for a constant $c$ which depends on the locality and maximum degree of the interaction graph.
 Then we check closeness for all parameters $\mu_i$ which appear in our Hamiltonians.
 We can then use the following lemma to argue that if the measured Hamiltonians are close, then the sampled states are close:
 \begin{lemma}[Lemma 16 of ~\cite{brandao2017quantum}]\label{lem:Gibbs_perturbation}
		Let $H,H'$ be Hermitian matrices. Then:
		\begin{align}
			\norm{\frac{e^{-H_1}}{\tr\left[e^{-H_1}\right]}-\frac{e^{-H_2}}{\tr\left[e^{-H_2}\right]}}_1 \leq 2(e^{\|H_1-H_2\|_\infty}-1).
		\end{align}
	\end{lemma}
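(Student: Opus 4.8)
The plan is to prove \cref{lem:Gibbs_perturbation} by a triangle‑inequality decomposition that isolates the contribution of the numerators $e^{-H_i}$ from that of the partition functions $Z_i := \tr[e^{-H_i}]$. Write $\Delta := H_2 - H_1$, so that $\norm{\Delta} = \norm{H_1-H_2}_\infty$, and split
\begin{align}
\frac{e^{-H_1}}{Z_1} - \frac{e^{-H_2}}{Z_2} \;=\; \frac{e^{-H_1} - e^{-H_2}}{Z_1} \;+\; e^{-H_2}\!\left(\frac{1}{Z_1} - \frac{1}{Z_2}\right). \nonumber
\end{align}
It then suffices to bound the trace norm of each summand by $e^{\norm{\Delta}}-1$.

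First I would control the partition functions. From $-\norm{\Delta}\,\ident \preceq \Delta \preceq \norm{\Delta}\,\ident$ one gets $-H_1 - \norm{\Delta}\,\ident \preceq -H_2 \preceq -H_1 + \norm{\Delta}\,\ident$; since $A \mapsto \tr[e^{A}]$ is monotone in the Loewner order (the sorted eigenvalues of $A$ are dominated entrywise by those of $B$ whenever $A \preceq B$, by Weyl monotonicity), this yields $e^{-\norm{\Delta}} Z_1 \le Z_2 \le e^{\norm{\Delta}} Z_1$, hence $|Z_1 - Z_2| \le (e^{\norm{\Delta}}-1)Z_1$. The second summand therefore has trace norm $Z_2\,\lvert 1/Z_1 - 1/Z_2\rvert = \lvert Z_1 - Z_2\rvert / Z_1 \le e^{\norm{\Delta}}-1$.

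The main step is the numerator bound $\norm{e^{-H_1} - e^{-H_2}}_1 \le (e^{\norm{\Delta}}-1)Z_1$. For this I would interpolate along $H(s) := H_1 + s\Delta$, $s\in[0,1]$, and use the Duhamel formula
\begin{align}
e^{-H_1} - e^{-H_2} \;=\; \int_0^1\!\!\int_0^1 e^{-uH(s)}\,\Delta\, e^{-(1-u)H(s)}\; du\, ds. \nonumber
\end{align}
For fixed $s$ and $u\in(0,1)$, the Schatten--Hölder inequality with exponents $1/u$ and $1/(1-u)$ (with the middle factor measured in operator norm), together with $\norm{e^{-uH(s)}}_{1/u} = \bigl(\tr[e^{-H(s)}]\bigr)^{u}$ for the positive operator $e^{-uH(s)}$, gives $\norm{e^{-uH(s)}\,\Delta\,e^{-(1-u)H(s)}}_1 \le \norm{\Delta}\,\tr[e^{-H(s)}]$; the endpoint cases $u\in\{0,1\}$ reduce directly to $\norm{\Delta\, e^{-H(s)}}_1 \le \norm{\Delta}\,\tr[e^{-H(s)}]$. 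Applying the partition‑function estimate to $H(s)$ gives $\tr[e^{-H(s)}] \le e^{s\norm{\Delta}}Z_1$, and integrating yields $\norm{e^{-H_1}-e^{-H_2}}_1 \le \norm{\Delta}\,Z_1\int_0^1 e^{s\norm{\Delta}}\,ds = (e^{\norm{\Delta}}-1)Z_1$. Combining the two summand estimates with the triangle inequality gives the claimed bound $2(e^{\norm{\Delta}}-1)$.

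I expect the only delicate point to be the Schatten--Hölder step: matching the exponents, treating the endpoints $u\in\{0,1\}$, and using positivity of $e^{-uH(s)}$ to evaluate the Schatten $1/u$‑norm; everything else is routine. I note that one could alternatively differentiate $s \mapsto e^{-H(s)}/Z(s)$ and show $\norm{\tfrac{d}{ds}\rho(s)}_1 \le 2\norm{\Delta}$, which gives the slightly stronger $2\norm{\Delta}\le 2(e^{\norm{\Delta}}-1)$; but the decomposition above reproduces the stated constant most transparently.
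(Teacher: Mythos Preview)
The paper does not give its own proof of this lemma; it is quoted verbatim as Lemma~16 of \cite{brandao2017quantum} and used as a black box in the heuristic verification discussion. So there is nothing in the paper to compare against.

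Your argument is correct. The decomposition into the numerator term $(e^{-H_1}-e^{-H_2})/Z_1$ and the partition-function term $e^{-H_2}(1/Z_1 - 1/Z_2)$ is standard, and each of your bounds goes through: Weyl monotonicity gives $e^{-\|\Delta\|}Z_1 \le Z_2 \le e^{\|\Delta\|}Z_1$, the Duhamel identity for $e^{-H(s)}$ is correct, and the three-factor Schatten--H\"older step with exponents $(1/u,\infty,1/(1-u))$ together with $\|e^{-uH(s)}\|_{1/u} = (\tr e^{-H(s)})^u$ for positive operators yields $\|e^{-uH(s)}\Delta e^{-(1-u)H(s)}\|_1 \le \|\Delta\|\,\tr e^{-H(s)}$. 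Integrating against $\tr e^{-H(s)} \le e^{s\|\Delta\|}Z_1$ then gives exactly $(e^{\|\Delta\|}-1)Z_1$ for the first summand, matching the second. The endpoint cases $u\in\{0,1\}$ are indeed trivial, and your aside that differentiating $\rho(s)$ directly gives the sharper $2\|\Delta\|$ is also right. Nothing to fix.
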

 \noindent In particular, if $\beta\|H_1-H_2\|\leq \epsilon \ll 1$, then $\norm{\rho(H_1,\beta)- \rho(H_2,\beta)}_1\leq O(\epsilon)$. 

 In particular, if we wish to verify that the actual observed Hamiltonian is close to the ideal Hamiltonian on all $k$-local terms, then this reduces to learning a Hamiltonian with $4^k\binom{n}{k}=O(\poly(n))$-many terms.
 This can be done efficiently using the algorithms of \cite{bakshi2023learning}.

\paragraph{Limitations of the Protocol}
 However, our measurement approach is limited in the sense that the state prepared may not have been the Gibbs state of the desired form, but some other state $\rho$.
 We can always write $\rho = \frac{1}{Z}\sum_i p_i \ket{\psi_i}\bra{\psi_i} = \frac{1}{Z}\sum_i e^{-\beta \lambda_i} \ket{\psi_i}\bra{\psi_i}$ where we have written $p_i = \frac{e^{-\beta \lambda_i}}{Z}$ for some ``false'' energies $\lambda_i$.
 This gives and effective Hamiltonian $H_{\text{eff}} = \sum_i \lambda_i \ket{\psi_i}\bra{\psi_i} = \sum_j \mu'_j P_j$.
 Thus if the state is incorrect, and not close to the Gibbs state, then when we apply our learning algorithm we will learn the parameters of $H_{\text{eff}}$.
 Although the $\mu'_i$ associated with $H_{\text{eff}}$ may be close to $H_P$ on all the parameters $\mu_i$, there may be some $\mu'_i$ which are non-zero (and large) when $\mu_i$ is zero.
 Thus $H_{\text{eff}}$ will look close to $H_P$ for the Pauli coefficients that we have measured.
 In general, to distinguish $H_{\text{eff}}$ and $H_P$, we may have to make $4^n$ many measurements.
 
 Despite this, it remains an open question whether Hamiltonian structure learning can be done efficiently for quantum Gibbs states.
 That is, can we not only estimate the values of the parameters $\mu_i$, but determine the entire Hamiltonian structure?
 Classically, this can be done efficiently \cite{vuffray2016interaction,klivans2017learning}.

Finally, we note that although the above Hamiltonian learning procedure may be spoofed if the measurements are performed by an untrustworthy source, they may be useful for verification in a laboratory setting where measurements are trusted.

\paragraph{Gibbs Sampling as a Test for Quantum Advantage}

We briefly note here that the algorithms considered in this work for Gibbs state preparation are well beyond the power of NISQ devices, and likely require a full fault tolerant quantum computer to implement.
As such, we believe that Gibbs sampling as a test of quantum advantage may be more appropriate for analogue or dissipative models of computation which naturally approach a Gibbs state in certain conditions. 
Alternatively, we suggest that, for the specific case of IQP parent Hamiltonians, there may be much simpler algorithms which guarantee convergence to the Gibbs state.
Furthermore, because Gibbs states are fixed points of Lindbladians, we might expect a certain level of robustness to external noise.
For example, one might hope that Trotterizing a parent Hamiltonian and running its time-evolution on a weakly-noisy device may result in the state converging to the Gibbs state.
 
\section{Discussion and Future Work}
\label{Sec:Discussion}

In this paper we have constructed two families of $O(1)$-local Hamiltonians for which their corresponding Gibbs states are classically hard to sample at $O(1)$ temperatures.
Furthermore, these Gibbs states can be efficiently prepared and sampled from using a quantum computer.
By showing that our hardness-of-sampling results hold for $\beta = \Omega(1/\sqrt{\Delta})$, we have placed bounds on where classical sampling algorithms can be efficient.
Finally, we have suggested some heuristic routes for verifying quantum advantage via Gibbs sampling.

Beyond the work studied here, there exist many potential future routes for improvement.
	\begin{enumerate}
 
	    \item 
	
	Although the Hamiltonians here have constant locality, they do not necessarily correspond to Hamiltonians seen in nature, or to distributions we are interested in sampling from in computer science. 
	A natural question to ask it whether sampling from Gibbs states of geometrically local Hamiltonians is still hard at constant temperature, even considering hardness with additional constraints such as translationally invariant terms or certain symmetries or restricted families of interaction terms.
	Remarkably, in the case of ground states, predicting local observables is known to be hard on 2D lattice and 1D translationally invariant chains \cite{gottesman2009quantum,gharibian2019oracle, watson2020complexity}.
    While the properties of Gibbs states in 1D are well understood to be easy-to-sample, for 2D and beyond is not.
	
	\item We also note that the parent Hamiltonian here has same partition function as a non-interacting Hamiltonian.
	Since non-interacting Hamiltonians do not undergo thermally-driven phase transitions, the parent Hamiltonian here does not either.
	Yet there is an apparent transition in temperature where the system moves from easy-to-sample to hard-to-sample.
	It would be interesting to understand if this transition coincides with other physical changes in the system, or if there is some way of characterising the change in sampleability in terms of other physics properties such as entanglement, entropies, etc.

    \item The bounds on efficient sampling of Gibbs states by a classical computer leaves a gap open relative to the boundary showing classical easiness of sampling in \citeauthor{bakshi2024high} \cite{bakshi2024high}, i.e. $O(\sqrt{\Delta})$ vs $\Omega(\Delta)$. 
    There is future work to be done to understand where exactly this boundary lies.

    \item Recently there has been much discussion of learning Hamiltonians with sample access to Gibbs states \cite{anshu2021sample, anshuweb, haah2022optimal, alhambra2023quantum, onorati2023efficient, rouze2024learning, garcia2024estimation, bakshi2024high}.
    Universally, the sample complexity of these algorithms increases as the temperature drops.
    However, it is notable that some of these algorithms are only efficient in the high temperature regime, and their costs blows up exponentially past some critical threshold.
    It would be desirable to understand how these thresholds relate the onset of sampling complexity (if there is any relation at all).

    \item As far as we are aware, classical hardness-of-sampling results for noisy IQP circuits from Ref. \cite{fujii2016computational} are only known for sampling to exponentially high precision in total variation distance (i.e. multiplicative error). 
    It remains an open question whether the output distributions of these noisy circuits can be shown to anti-concentrate. 
    If they do anti-concentrate, then this would imply hardness of classical sampling to $1/\poly(n)$ error in total variation distance, which then implies hardness of sampling Gibbs states with $O(1/\poly(n))$ error for the circuits in \cref{Sec:Hardness_Exponential}.

 \end{enumerate}

	\section*{Acknowledgements}

 {\begingroup
		\hypersetup{urlcolor=navyblue}

  The authors thank \href{https://orcid.org/0009-0008-5983-9130}{Atul Mantri} and \href{https://orcid.org/0000-0002-4766-7967}{Dominik Hangleiter} for useful discussions about IQP circuits.
 The authors also thank \href{https://orcid.org/0000-0001-7458-4721}{Yi-Kai Liu} and  \href{https://orcid.org/0000-0001-9743-7978}{Anirban Chowdhury} for useful feedback and discussions, as well as \href{https://orcid.org/0000-0002-7451-9687}{Ewin Tang} for correcting some of our errors in the initial manuscript.
		\endgroup}

    JR acknowledges support from the National Science Foundation Graduate Research Fellowship Program under Grant No. DGE 1840340.
    JDW acknowledges support from the United States Department of Energy, Office of Science, Office of Advanced Scientific Computing Research, Accelerated Research in Quantum Computing program, and also NSF QLCI grant OMA-2120757.

 \section*{Author Contribution Statement}

 Both authors contributed equally and are listed alphabetically.
	
	
		\printbibliography[heading=bibintoc]
	
	\appendix
	\section*{Appendix}
	
	\section{Hardness of Noisy IQP Sampling to Inverse Exponential Approximation} \label{Appendix:Hardness_Exponential}
	We make modifications to the proof from Ref. \cite{fujii2016computational} to show that noisy IQP circuits are hard to sample with, even with inverse exponential error. First, we introduce the following lemma (derived in a different form in Ref. \cite{fujii2016computational}) which allows us to bound the probability that postselected probabilities are close.
	
	\begin{lemma}  \label{Lemma:postselect}
		Let $P$ and $P'$ be any two probability distributions over bitstrings of length $n$. Suppose further that the bitstrings start with a decision bit $x$, and are followed by a `postselection' register $y$ of $n-1$ bits. For any $0 < \delta < 1/2$, if $\|P'-P\|_{1} < \frac{\delta}{2+\delta}P(y=0)$, then
		\begin{align}
			|P'(x|y=0)-P(x|y=0)| <\delta
		\end{align}
	\end{lemma}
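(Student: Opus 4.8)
The plan is to reduce the statement to an elementary calculation. The decision register $x$ is a single bit, and $|P'(x|y=0)-P(x|y=0)|$ takes the same value for $x=0$ and $x=1$ (the two conditionals sum to $1$ in each distribution), so it suffices to bound it for $x=1$. Write $a=P(x=1,y=0)$ and $b=P(y=0)$, so that $P(x=1|y=0)=a/b$ and $0\le a\le b$; define $a',b'$ analogously for $P'$. Put $s=\|P'-P\|_1$ and $\eta=\frac{\delta}{2+\delta}P(y=0)=\frac{\delta}{2+\delta}b$, so the hypothesis is exactly $s<\eta$, which forces $b>0$ and makes all conditionals well defined.

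First I would extract two estimates from $s<\eta$: since $a$ is the probability of the single outcome $(x,y)=(1,0^{n-1})$, we have $|a'-a|\le s$; and since $b,b'$ are the corresponding marginals on the $y$-register, $|b'-b|\le\sum_{x}|P'(x,y=0)-P(x,y=0)|\le s$, hence $b'\ge b-s>b-\eta>0$.

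Then the core is a short chain of inequalities:
\begin{align*}
\left|\frac{a'}{b'}-\frac{a}{b}\right|
=\frac{|a'b-ab'|}{bb'}
=\frac{|b(a'-a)+a(b-b')|}{bb'}
\le\frac{b|a'-a|+a|b'-b|}{bb'}
\le\frac{(a+b)\,s}{bb'}
\le\frac{2b\,s}{b(b-s)}
=\frac{2s}{b-s},
\end{align*}
where the last two steps use $a\le b$ and $b'\ge b-s$. Since $s\mapsto\frac{2s}{b-s}$ is increasing on $[0,b)$ and $s<\eta$, this quantity is strictly less than $\frac{2\eta}{b-\eta}$; substituting $\eta=\frac{\delta b}{2+\delta}$ gives $\frac{2\eta}{b-\eta}=\delta$, which is the claimed bound.

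I do not expect a genuine obstacle: the only point needing care is pinning down the constant $\frac{\delta}{2+\delta}$, which comes from solving $\frac{2\eta}{b-\eta}=\delta$ for $\eta$, and keeping every inequality strict (the hypothesis is strict) so the conclusion is strict as stated. Equivalently, the same computation can be phrased as $\|P'(\cdot|y=0)-P(\cdot|y=0)\|_1<2\delta$, which is the form in which one uses the lemma when postselecting onto an error-free flag register.
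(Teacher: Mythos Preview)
Your proof is correct and follows essentially the same route as the paper's: both bound $\bigl|\tfrac{a'}{b'}-\tfrac{a}{b}\bigr|$ by $\tfrac{2s}{b'}$ (you via the identity $a'b-ab'=b(a'-a)+a(b-b')$, the paper via the triangle inequality with the intermediate term $a/b'$), then use $b'>b-s$ together with $s<\tfrac{\delta}{2+\delta}b$ to conclude the bound is strictly less than $\delta$. The arguments are interchangeable and only cosmetically different.
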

	\begin{proof}
		We can lower bound $P'(y=0)$ as 
		\begin{align}
			P'(y=0) > P(y=0)  - \|P'-P\|_{1} > P(y=0)\frac{2}{2+\delta}
		\end{align}
		Then, we have
		\begin{align}
			|P'(x|y=0)-P(x|y=0)| &= \left| \frac{P'(x,y=0)}{P'(y=0)} - \frac{P(x,y=0)}{P(y=0)}\right|\\
			&\leq  \left| \frac{P'(x,y=0)}{P'(y=0)} - \frac{P(x,y=0)}{P'(y=0)}\right| + \left| \frac{P(x,y=0)}{P'(y=0)} -\frac{P(x,y=0)}{P(y=0)} \right| \\
			&\leq  \frac{1}{P'(y=0)}\left| P'(x,y=0) - P(x,y=0)\right| + P(x,y=0)\left| \frac{1}{P'(y=0)} -\frac{1}{P(y=0)} \right| \\
			&\leq  \frac{\|P'-P\|_{1}}{P'(y=0)} + P(x,y=0)\left| \frac{P(y=0)-P'(y=0)}{P'(y=0)P(y=0)}  \right| \\
			&\leq \frac{\|P'-P\|_{1}}{P'(y=0)}\left( 1 +  \frac{P(x,y=0)}{P(y=0)}   \right) \\
			&\leq \frac{2\|P'-P\|_{1}}{P'(y=0)}\\
			&\leq \frac{2\delta }{(2+\delta)}\frac{P(y=0)}{P'(y=0)}\\
			&< \delta
		\end{align}
		Where we have used the fact that $P(x,y=0) \leq P(y=0)$, and substituted expressions for $\|P'-P\|_{1}$ and $P'(y=0)$ in terms of $P(y=0)$ in the final steps.
	\end{proof}

	\begin{definition}(PostBQP)
		A language $L$ is in the class $\PostBQP$ iff there exists a uniform family of postselected circuits $\{C_w\}$ with a decision port $x$ and a postselection port $y$, and 
		\begin{align}
			\text{if } w\in L, P_{C_w}(x=1|y=0) \geq 1/2 + \delta\\
			\text{if } w \not\in L, P_{C_w}(x=1|y=0) \leq 1/2 - \delta
		\end{align} 
		where $\delta$ can be chosen arbitrary such that $0 < \delta < 1/2$. Further, the class $\PostBQP^*$ is defined with the added restriction that $P_{C_w}(y = 0) > 2^{-6m-4}$, where $m$ is the number of qubits, and \cite{fujii2016computational} show that $\PostBQP^* = \PostBQP = \PP$.
	\end{definition}

	\begin{lemma} \label{Lemma:InverseExponential} (Restatement of \cref{Lemma:Hardness_Exponential})
		There exists a family of IQP circuits $\mathcal{C}$, constructed on a 3D cubic lattice, consisting of a single layer of $e^{iZ\pi/8}$ gates and 4 layers of nearest-neighbour $e^{iZZ\pi/4}$ gates, such that sampling from any probability distribution $Q$ over bitstrings which satisfies $\|P_{C,q}-Q\|_{1} < (1-q)^n2^{-6n-4}/5$ is not possible with a classical polynomial algorithm, unless the polynomial hierarchy collapses to the third level, when $q < 0.134$.
	\end{lemma}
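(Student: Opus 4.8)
The plan is to port the topological-MBQC hardness argument of \cite{fujii2016computational} into our noise model (bit-flips on the input of an IQP circuit) and then upgrade the conclusion from multiplicative/exact sampling to inverse-exponential $1$-error sampling via the postselection bound of \cref{Lemma:postselect}. First I would recall the Fujii--Morimae construction: a depth-4 IQP circuit on a 3D cubic lattice whose gates are a single layer of $e^{iZ\pi/8}$ ``magic'' rotations followed by four layers of nearest-neighbour $e^{iZZ\pi/4}$ entangling gates; measuring all qubits in the computational basis realizes a measurement pattern on a cluster/Raussendorf-type lattice that, after classical decoding of the topological code, simulates a universal (in fact $\PostBQP$) computation provided the effective error rate stays below the $\sim 0.134$ threshold for topologically protected MBQC. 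The key observation is that independent bit-flips of strength $q$ on the \emph{input} qubits of this IQP circuit propagate, through the Clifford part $e^{iZZ\pi/4}$ layers, to a local stochastic Pauli error on the cluster state of rate $O(q)$ below threshold, so the Fujii--Morimae decoder still succeeds and we can postselect on ``no logical/flag error'' to extract a $\PostBQP$ decision bit $x$ conditioned on a postselection register $y=0$.

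Next I would invoke the $\PostBQP^\ast = \PostBQP = \PP$ fact recorded in the excerpt: there is a uniform family of these noisy IQP circuits such that, for any language $L\in\PP$ and input $w$, the ideal noisy-circuit distribution $P_{C,q}$ satisfies $P_{C,q}(x=1\mid y=0)\geq 1/2+\delta$ if $w\in L$ and $\leq 1/2-\delta$ otherwise, \emph{and} crucially $P_{C,q}(y=0) > (1-q)^n 2^{-6m-4}$ — the $(1-q)^n$ factor being the probability that no input bit-flip occurs at all, which lower-bounds the postselection success probability in the noisy model by the noiseless $\PostBQP^\ast$ guarantee $2^{-6m-4}$. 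Then I would argue by contradiction: suppose a randomised classical poly-time algorithm samples from some $Q$ with $\|P_{C,q}-Q\|_1 < (1-q)^n 2^{-6n-4}/5$. Applying \cref{Lemma:postselect} with this distance bound and $P(y=0) \geq (1-q)^n 2^{-6n-4}$, choosing $\delta$ so that $\frac{\delta}{2+\delta} \geq 1/5$ (e.g.\ $\delta = 2/3 < 1/2$), gives $|Q(x\mid y=0) - P_{C,q}(x\mid y=0)| < \delta$; taking $\delta$ slightly smaller than the $\PostBQP$ gap and using Stockmeyer counting / approximate sampling-to-counting as in Aaronson--Arkhipov and Bremner--Montanaro--Shepherd, the classical sampler would let a $\mathsf{BPP^{NP}}$ machine decide an arbitrary $\PP = \PostBQP$ language, collapsing $\mathsf{PH}$ to the third level.

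The main obstacle — and the only place where real work beyond bookkeeping is needed — is the first step: verifying that \emph{input} bit-flip noise of constant strength $q < 0.134$, as opposed to the depolarizing-per-gate noise model in \cite{fujii2016computational}, still yields a sub-threshold local stochastic error model on the Raussendorf lattice for which the topological decoder provably succeeds, and that the $\PostBQP^\ast$ postselection-probability lower bound $2^{-6m-4}$ survives with only the multiplicative $(1-q)^n$ loss. I would handle this by commuting each input $X_i$ through the (Clifford) $e^{iZZ\pi/4}$ layers — producing a Pauli supported on $O(1)$ qubits because the circuit is depth 4 and geometrically local — and feeding the resulting constant-rate, constant-range stochastic Pauli error into the threshold theorem for topologically protected MBQC exactly as Fujii--Morimae do; the $e^{iZ\pi/8}$ layer is a single layer of commuting diagonal gates and only contributes the magic states, unaffected by the $X$-noise propagation up to a reassignment one can absorb into the decoder. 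A secondary (routine) point is arithmetic: checking that $n$ versus the number of ancillae/qubits $m$ in the exponent $2^{-6m-4}$ is consistent with stating the bound as $2^{-6n-4}/5$, which holds since the construction uses $m = O(n)$ qubits and one can rescale, but I would state the lemma for $m$ the actual qubit count and note $m = \Theta(n)$.
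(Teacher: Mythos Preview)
Your overall strategy is the paper's: encode a $\PostBQP^\ast$ computation into the Fujii--Morimae fault-tolerant IQP family, lower-bound the postselection probability by $(1-q)^n 2^{-6m-4}$, and invoke \cref{Lemma:postselect} to transfer hardness to any $Q$ inverse-exponentially close to $P_{C,q}$. Two points deserve correction, though.

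First, the paper keeps \emph{two} postselection registers: the $\PostBQP$ register $y$ inherited from $C_w$, and a separate error-detection register $y'$ produced by the topological encoding. The Fujii--Morimae guarantee is only $\|P_{C,q\,|\,y'=0} - P_{C_w}\|_1 \leq \epsilon$, not that $P_{C,q}(x\mid y=0)$ itself decides $L$ with gap $\delta$ as you assert. The paper therefore applies \cref{Lemma:postselect} \emph{twice}---once to pass from $P_{C_w}$ to $P_{C,q}$ conditioned on $(y,y')=0$ using $P_{C_w}(y=0)>2^{-6m-4}$, and once to pass from $P_{C,q}$ to $Q$ using $P_{C,q}(y=0,y'=0)\gtrsim(1-q)^n2^{-6m-4}$---and closes with a triangle inequality giving total gap loss $\delta'+\delta''$. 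Your single-register sketch silently skips the first application, which is where the fault-tolerance encoding error $\epsilon$ is actually absorbed; merging $(y,y')$ into one register is fine, but that extra step is still required. Second, the obstacle you flag as ``the only place where real work is needed'' is a non-issue: for any IQP circuit $C=H^{\otimes n}DH^{\otimes n}$, an input bit-flip $X_i$ becomes $Z_i$ after the first Hadamard layer, commutes through \emph{all} of $D$ (including the non-Clifford $e^{iZ\pi/8}$ rotations, since $D$ is diagonal), and re-emerges as an output bit-flip $X_i$. Input bit-flip noise of rate $q$ is therefore literally identical to measurement bit-flip noise of rate $q$, which is exactly the model for which the $0.134$ threshold is established---no propagation analysis is needed. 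Finally, the closing step is the direct ``a classical sampler with postselection places $\PP$ inside $\textsf{PostBPP}$'' argument; Stockmeyer counting is not used and is unnecessary at this error scale.
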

 
	\begin{proof}
 
		First, we will outline the proof in \cite{fujii2016computational} for review, then we will make our modifications. As in \cite{fujii2016computational}, by topologically protected MBQC, any quantum circuit can be encoded into a fault-tolerant IQP circuit of the form mentioned in \cref{Lemma:InverseExponential}, such that postselection recovers the original output distribution. 
		Explicitly, for any quantum circuit $C_w$ on $m$ qubits and error parameter $\epsilon$, one can construct $C$ on $n = \poly(m,\log (1/\epsilon))$ qubits, such that
		\begin{align}
			\|P_{C,q|y'=0} - P_{C_w}\|_{1} \leq \epsilon
		\end{align}
		where $y'$ is an error-detection register. This is essentially a variant of the threshold theorem, for the case of postselected error detection, rather than error correction (instead of doing fault-tolerant MBQC which requires adaptive measurement, we post-select on outcomes corresponding to `no error-detected').

		Suppose we choose some uniform family of quantum circuits $C_w$ which decide some $\PP$-complete language $L$ with confidence gap $0<\delta<1/2$, with decision port $x$ and postselection port $y$ where $P_{C_w}(y=0) > 2^{-6m-4}$. 
		For some $0 < \delta' < \delta$, suppose we encode this circuit into an IQP circuit $C$ using the fault-tolerance construction mentioned above with polynomial overhead, where $\epsilon  = 2^{-6m-4}\frac{\delta'}{2+\delta'}$. 
		By \cref{Lemma:postselect} and using the fact that $P_{C_w}(y=0) > 2^{-6m-4}$, this means that $|P_{C,q}(x|y=0,y'=0) - P_{C_w}(x|y=0)| < \delta'$. 
		Clearly, post-selection on \jw{$y,y'$ registers for} $P_{C,q}$ can also decide $L$, with confidence gap $\delta-\delta'$. 
		There are standard complexity-theoretic reductions to show hardness of \textit{exactly} sampling from probability distributions that decide $PP$-complete problems under postselection (see \cite{hangleiter2206computational} for review). 
		This concludes the outline of the proof from \cite{fujii2016computational}. 
		
		Now, note that  $P_{C,q}(y'=0) \geq (1-q)^n$ because $y'=0$ in $P_{C,q}$ when no bit-flip error occurs. This means that
		\begin{align}
			P_{C,q}(y=0,y'=0) &= P_{C,q}(y=0|y'=0)P_{C,q}(y'=0)\\
			&\geq P_{C,q}(y=0|y'=0)(1-q)^n\\
			&\geq (P_{C_w}(y=0) - \epsilon)(1-q)^n\\
			&> 2^{-6m-4}\left(1-\frac{\delta'}{2+\delta'}\right)(1-q)^n\\
			&> 2^{-6m-4}\frac{2}{2+\delta'}(1-q)^n
		\end{align}
		For some $0 < \delta'' < \delta'$, suppose we consider some distribution $Q$ such that $\|P_{C,q} - Q\|_{1} \leq (1-q)^n2^{-6m-4}\frac{2}{2+\delta'}\frac{\delta''}{2+\delta''}$. Using the triangle inequality and two applications of \cref{Lemma:postselect}, we have that
		\begin{align}
			|Q(x|y=0,y'=0) - P_{C_w}(x|y=0)| &\leq |Q(x|y=0,y'=0) - P_{C,q}(x|y=0,y'=0)| \\
			&+ |P_{C,q}(x|y=0,y'=0) - P_{C_w}(x|y=0)|\\
			&< \delta'' + \delta'
		\end{align}
		
		\noindent Thus, choosing $\delta''$ to be close to $1/2$ and $\delta'$ to be close to $0$, one can solve $\textsf{PP}$-complete problems by postselecting on $Q$ whenever $\|P_{C,q} - Q\|_{1} \leq (1-q)^n2^{-6n-4}/5$, which implies that sampling from $Q$ is hard (assuming $PH$)
	\end{proof}

 \section{Equivalence of Gibbs States and Circuits with Input Noise} \label{Appendix:Gibbs_IQP_Equivalence}

  Here we provide a proof of \cref{Lemma:Gibbs_to_Circuit_Sampling_Main} in the main text.
	\begin{lemma}\label{Lemma:Gibbs_to_Circuit_Sampling}
		For any circuit $C$ constructed from $k$-local gates of depth $d$, there exists a $O(kd)$-local parent Hamiltonian $H_C$, such that:
  \begin{align*}
      \frac{1}{Z}e^{-\beta H_C} &= C(\cD_q^{\ox n}(\ket{0}\bra{0}))C^\dagger
  \end{align*}
  for $q = \frac{e^{-\beta}}{1+e^{-\beta}}$.
	\end{lemma}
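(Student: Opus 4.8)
The plan is to use the parent Hamiltonian $H_C = C\HNI C^\dagger$ already defined above and to push essentially all of the analysis onto the non-interacting Hamiltonian $\HNI$, exploiting that conjugation by the unitary $C$ commutes with the matrix exponential.

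First I would note that unitarity of $C$ gives $e^{-\beta H_C} = C e^{-\beta\HNI}C^\dagger$, and by cyclicity of the trace $Z = \tr[e^{-\beta H_C}] = \tr[e^{-\beta\HNI}] = (1+e^{-\beta})^n$, the free-spin partition function computed in the preliminaries. Hence $\tfrac1Z e^{-\beta H_C} = C\bigl(\tfrac1Z e^{-\beta\HNI}\bigr)C^\dagger$, so it suffices to identify the Gibbs state of $\HNI$ with $\cD_q^{\ox n}(\ket{0}\bra{0}^{\ox n})$.

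Next I would compute that state directly. Since $\HNI = \sum_i \tfrac12(\mathds{1} - Z_i)$ is a sum of commuting single-qubit terms whose local operator has eigenvalue $0$ on $\ket{0}$ and $1$ on $\ket{1}$, we get $e^{-\beta\HNI} = \bigotimes_{i=1}^n\bigl(\ket{0}\bra{0} + e^{-\beta}\ket{1}\bra{1}\bigr)$ and therefore $\tfrac1Z e^{-\beta\HNI} = \bigotimes_{i=1}^n \tfrac{1}{1+e^{-\beta}}\bigl(\ket{0}\bra{0} + e^{-\beta}\ket{1}\bra{1}\bigr)$. On the other hand $\cD_q(\ket{0}\bra{0}) = (1-q)\ket{0}\bra{0} + q\ket{1}\bra{1}$, so setting $q = \tfrac{e^{-\beta}}{1+e^{-\beta}}$, i.e. $1-q = \tfrac{1}{1+e^{-\beta}}$, makes each tensor factor of $\cD_q^{\ox n}(\ket{0}\bra{0}^{\ox n})$ agree with the corresponding factor of $\tfrac1Z e^{-\beta\HNI}$. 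Conjugating the resulting identity $\tfrac1Z e^{-\beta\HNI} = \cD_q^{\ox n}(\ket{0}\bra{0}^{\ox n})$ by $C$ proves the stated equality.

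Finally, for the locality claim I would write $H_C = \sum_{i=1}^n \tfrac12\bigl(\mathds{1} - CZ_iC^\dagger\bigr)$ and invoke a standard causal-cone argument: the Heisenberg-evolved operator $CZ_iC^\dagger$ is supported only on the backward lightcone of qubit $i$, and since $C$ has depth $d$ with $k$-local gates this cone --- and hence the support of each term of $H_C$ --- has size $O(kd)$ (and $O(1)$ for the constant-depth, geometrically local IQP circuits used in the main text). There is essentially no obstacle in this argument; the only points needing mild care are the normalization bookkeeping in the second step and stating the causal-cone bound precisely.
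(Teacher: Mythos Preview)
Your proof is correct and follows essentially the same approach as the paper: both reduce, via conjugation by $C$, to identifying the Gibbs state of $\HNI$ with $\cD_q^{\ox n}(\ket{0}\bra{0}^{\ox n})$, the only cosmetic difference being that you factor into single-qubit tensor products while the paper expands as a sum over bitstrings weighted by Hamming weight. You additionally supply the causal-cone argument for the locality claim, which the paper's own proof omits entirely.
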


	\begin{proof}
		Let $C$ denote an arbitrary quantum circuit.
		Define a corresponding parent Hamiltonian $H_C = C\HNI C^\dagger$ with corresponding Gibbs state:
		\begin{align}
			\frac{1}{Z}e^{-\beta H_C} &= \frac{1}{Z}e^{-\beta \sum_x \lambda_x C\ket{x}\bra{x}C^\dagger} \\
			&= \frac{1}{Z} \sum_{x\in \{0,1\}^n} e^{-\beta  \HW(x) }C\ket{x}\bra{x}C^\dagger \label{Eq:Sample_Distribution}
		\end{align} 
		Note that because unitary transformations preserve the spectrum, the partition functions of $\HNI$ and $H_C$ are the same. Let $|x|$ be hamming weight of bitstring $x \in \{0,1\}^n$.
        \noindent We can then compare this to the output state of $C$ with a single set of bit-flip noise of strength $q$ occurring on the input.
        \begin{align}
            C(\cD_q^{\ox n}(\ket{0}\bra{0}))C^\dagger &= C \left(\sum_{x \in \{0,1\}^n} q^{\HW(x)}(1-q)^{n-\HW(x)}\ket{x}\bra{x}^{\ox n} \right)C^\dagger  \\
            &= \sum_{x \in \{0,1\}^n} \bigg(\frac{q}{1-q}\bigg)^{\HW(x)}(1-q)^{n} C\ket{x}\bra{x}^{\ox n} C^\dagger \label{Eq:Noisy_Circuit}
        \end{align}
        Comparing \cref{Eq:Sample_Distribution} to \cref{Eq:Noisy_Circuit}, and noting that $Z = (1+e^{-\beta})^n$, we can see that if we set $\frac{q}{1-q} = e^{-\beta}$ and $Z = (1+e^{-\beta})^n = (1+\frac{q}{1-q})^n = 1/(1-q)^n$, then these two states are the same.
        From here it is straightforward to see that the distribution $ P(s) \coloneqq \bra{s}\frac{e^{-\beta H_C}}{Z}\ket{s}$ satisfies:
        \begin{align*}
            P(s) = P_{C, q }(s)  
        \end{align*}
        for  $q = \frac{e^{-\beta}}{1+e^{-\beta}}$.
	\end{proof}

\section{CNOT Encoding }\label{Appendix:CNOT_Encoding}

We use a result from \citeauthor{Bremner_2017} \cite{Bremner_2017} which shows that IQP circuits can be made robust to bit-flip noise on the input distribution, at the cost of non-locality in the gate set. 
The fundamental idea is to encode a repetition code into a given hard-to-sample IQP circuit, and then decode with high probability. 

We provide a brief explanation of the encoding as follows. Suppose we wish to encode some logical IQP circuit $C$ into a physical circuit $C_{r}$. For each qubit $i$ involved in $C$ (and initialised in the $\ket{0}$ state), prepare a block of $r$ qubits labelled $i_1,\ldots, i_r$ in $C_{r}$, which are initialised to the $\ket{0}$ state. Now, every $2$-qubit diagonal gate between qubits $i$ and $j$ in $C$ can be written in the form $D = e^{i \theta_1 Z_i + \theta_2 Z_j + \theta_3 Z_iZ_j}$ (modulo global phase), for some $\theta_1,\theta_2,\theta_3$. This notation is known as an `X-program' \cite{shepherd_2009, shepherd2010binary, shepherd2010quantum,  Bremner_2017, gross2023secret}. 
The encoded circuit $C_{r}$ then involves the following transformation for each $2$-qubit diagonal gate
\begin{align}
    D = e^{i \theta_1 Z_i + \theta_2 Z_j + \theta_3 Z_iZ_j} \to D_{enc} = e^{i \theta_1  Z_{i_1}Z_{i_2}\ldots Z_{i_r} + \theta_2 Z_{j_1}Z_{j_2}\ldots Z_{j_r} + \theta_3 Z_{i_1}Z_{j_1}Z_{i_2}Z_{j_2}\ldots Z_{i_r}Z_{j_r}} \label{Eq:Transform}
\end{align}
In \citeauthor{Bremner_2017}, it is shown that this construction essentially encodes each logical output bit into $r$ physical output bits, each with an independent probability $q$ of being incorrect. 
Therefore, the logical output distribution can be recovered by taking a majority vote. 
This is captured in the following lemma.

\begin{lemma} (From \cite{Bremner_2017}) \label{Lemma:BMS}
Let $C$ be an arbitrary IQP circuit constructed with $2$-qubit gates of depth $d$ on $n$ qubits. Then, for any integer parameter $r \geq 1$, there is an encoded IQP circuit $C_{enc}$ constructed with $2r$-local gates of depth $d$ on $nr$ qubits, and a decoding algorithm $A$ such that,
\begin{align}
    A(P_{C_{enc},q}) = P_{C,\pfail}
\end{align}
where $\pfail \leq (4q(1-q))^{r/2}$.
\end{lemma}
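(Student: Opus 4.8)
The lemma is the noise-robust IQP encoding of \cite{Bremner_2017}; the plan is to reconstruct their three-part argument. \textbf{Structure of the noiseless encoded circuit.} Write the logical IQP circuit as $C = H^{\otimes n} D_Z H^{\otimes n}$ with $D_Z = \prod_a e^{i\theta_a Z_{S_a}}$ a product of commuting diagonal rotations; the encoding of \cref{Eq:Transform} replaces each $S_a\subseteq[n]$ by the ``inflated'' set $\tilde S_a = \{i_k : i\in S_a,\ k\in[r]\}$ while keeping the angles fixed, so $C_{enc} = H^{\otimes nr}D_{enc}H^{\otimes nr}$ is again IQP, acts on disjoint $r$-qubit blocks (so parallel logical layers stay parallel and the depth stays $d$), and has gates on at most $2r$ qubits. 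I would then show by a short character-sum computation that $C_{enc}$ is exactly the repetition-code encoding of $C$: expanding $\langle Z| C_{enc}|0\rangle^{\otimes nr}$ over the Fourier basis, the phase $e^{i\sum_a\theta_a(-1)^{Y\cdot\tilde s_a}}$ depends on $Y$ only through the vector of block parities $\bar Y\in\{0,1\}^n$ (since $Y\cdot\tilde s_a = \sum_{i\in S_a}(\text{weight of block }i) \equiv \bar Y\cdot s_a \bmod 2$), so summing over the fibres of $Y\mapsto\bar Y$ annihilates every amplitude except on block-constant strings $Z = \mathrm{Enc}(z)$, where it reproduces $\langle z|C|0\rangle^{\otimes n}$. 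Hence $P_{C_{enc}}(\mathrm{Enc}(z)) = P_C(z)$ and $P_{C_{enc}}$ is supported on block-constant strings.

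\textbf{Noise transfer for IQP.} For any IQP circuit, $C_{enc}X_jC_{enc}^\dagger = H^{\otimes nr}D_{enc}Z_jD_{enc}^\dagger H^{\otimes nr} = X_j$ for every physical qubit $j$, because $D_{enc}$ is diagonal. Therefore $C_{enc}\big(\cD_q^{\otimes nr}(\ket{0}\bra{0}^{\otimes nr})\big)C_{enc}^\dagger = \cD_q^{\otimes nr}\big(C_{enc}\ket{0}\bra{0}^{\otimes nr}C_{enc}^\dagger\big)$, i.e. $P_{C_{enc},q}$ is obtained from $P_{C_{enc}}$ by flipping each of the $nr$ output bits independently with probability $q$. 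Combined with the first step, after the noise each block $i$ consists of $r$ i.i.d.\ bits, each equal to the logical bit $x_i$ with probability $1-q$, and distinct blocks are independent.

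\textbf{Decoding and the tail bound.} Take $A$ to be block-wise majority vote, ties broken arbitrarily. Since the noise is independent across blocks, the decoded string equals $(x_1\oplus e_1,\dots,x_n\oplus e_n)$ with $(x_i)\sim P_C$ and the $e_i$ i.i.d.\ $\mathrm{Bernoulli}(\pfail)$, $\pfail = \Pr[\mathrm{Bin}(r,q)\ge r/2]$; by the noise-transfer identity applied to $C$ itself, this is precisely $P_{C,\pfail}$, so $A(P_{C_{enc},q}) = P_{C,\pfail}$. Finally, for $q\le\tfrac12$ and $k\ge r/2$ one has $q^k(1-q)^{r-k}\le (q(1-q))^{r/2}$, hence $\pfail \le \sum_{k\ge r/2}\binom{r}{k}(q(1-q))^{r/2} \le 2^r(q(1-q))^{r/2} = (4q(1-q))^{r/2}$.

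The step requiring the most care is the first --- verifying that the $X$-program inflation genuinely implements the repetition encoding of the logical \emph{output distribution} for an arbitrary circuit, not merely the one-gate example; the character-sum argument above makes this transparent, and once it is in hand the noise-transfer identity and the binomial estimate are routine.
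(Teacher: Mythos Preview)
The paper does not give its own proof of this lemma---it is quoted verbatim from \cite{Bremner_2017}, and the only argument the paper supplies is the one-sentence summary preceding the statement (``this construction essentially encodes each logical output bit into $r$ physical output bits, each with an independent probability $q$ of being incorrect, so the logical distribution is recovered by majority vote''). Your three-step reconstruction is correct and is exactly the argument that summary is gesturing at: the character-sum computation verifies that the noiseless $C_{enc}$ outputs the repetition encoding of $P_C$, the IQP identity $C_{enc}X_jC_{enc}^\dagger = X_j$ transports i.i.d.\ input bit-flip noise to i.i.d.\ output bit-flip noise, and the binomial tail bound yields the stated $\pfail$.

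One small imprecision worth fixing: ``ties broken arbitrarily'' should be ``ties broken uniformly at random'' (or simply take $r$ odd). With a deterministic tie-breaking rule the per-block error probability depends on the value of the underlying logical bit, so the decoded errors are no longer identically distributed and the \emph{exact} identity $A(P_{C_{enc},q}) = P_{C,\pfail}$ fails by the boundary term $\Pr[\mathrm{Bin}(r,q)=r/2]$. Your tail bound also implicitly uses $q\le 1/2$ (needed for $q^k(1-q)^{r-k}\le (q(1-q))^{r/2}$ when $k\ge r/2$); this is automatic in the paper's setting since $q=e^{-\beta}/(1+e^{-\beta})<1/2$, but is worth stating.
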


It can be seen that the parent Hamiltonian of the above construction $H_{C_{enc}}$ will be $O(r)$-local.
Now, we show that this encoding can be achieved without non-local gates by using a networks of $CNOT$s
\begin{lemma}\label{Lemma:CNOT}
    Suppose IQP circuit $C$ on $n$ qubits is encoded into an encoded circuit $C_{enc}$ on $nr$ qubits as per \cref{Lemma:BMS}.
    For each logical qubit $i$ of $C$, let the block of $r$ physical qubits be labelled $i_1,\ldots,i_r$. 
    Let $C_{1}$ denote the circuit $C$ applied transversally\footnote{Transversally here means that for each gate between a pair of qubits $(i,j)$, in the encoded circuit the same gate now acts between the pair of qubits $(i_1, j_1)$.} to the first qubits of each block ($1_1,2_1,\ldots,n_1$). 
    Let the unitary $B$ be composed of the following CNOT network,
\begin{align}
    B =  \prod_{i \in 1,\ldots,n}\prod_{j \in 2,\ldots,r}CNOT_{i_1,i_j}
\end{align}
    where $CNOT_{x,y}$ is a CNOT controlled on qubit $x$ and targeting qubit $y$. 
    Then,
    \begin{align}
        B^\dagger C_{1}B = C_{enc}
    \end{align}
\end{lemma}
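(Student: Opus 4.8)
The plan is to verify the operator identity $B^\dagger C_1 B = C_{enc}$ by tracking how conjugation by the CNOT network $B$ transforms each diagonal gate of $C_1$. Since $C_1$ is a product of two-qubit diagonal gates acting on the first qubits of each block (plus, implicitly, single-qubit diagonal gates of the same X-program form), and $B$ is a product of CNOTs, it suffices to check the identity gate-by-gate: because $B$ is unitary, $B^\dagger (\prod_g g_1) B = \prod_g B^\dagger g_1 B$, so I only need to understand the action of conjugation by $B$ on a single generator $e^{i\theta Z_{i_1}}$ and on $e^{i\theta Z_{i_1}Z_{j_1}}$, and then argue these match the factors in the encoded gate $D_{enc}$ of \cref{Eq:Transform}.

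\textbf{Key steps.} First I would recall the standard conjugation rule for CNOT on Pauli-$Z$ operators: $CNOT_{a,b}^\dagger Z_a CNOT_{a,b} = Z_a Z_b$ and $CNOT_{a,b}^\dagger Z_b CNOT_{a,b} = Z_b$, and more generally conjugation turns $Z$ on a control into $Z_{\text{control}} Z_{\text{target}}$ while leaving $Z$ on a target alone (and leaving untouched any $Z$ on qubits not involved). Second, since $e^{i\theta Z_{i_1}} = \cos\theta\, \ident + i\sin\theta\, Z_{i_1}$, conjugating a diagonal exponential by a Clifford simply conjugates the exponent: $B^\dagger e^{i\theta Z_{i_1}} B = e^{i\theta\, B^\dagger Z_{i_1} B}$. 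Third, I would compute $B^\dagger Z_{i_1} B$: the only CNOTs in $B$ with qubit $i_1$ as control are $CNOT_{i_1,i_2},\ldots,CNOT_{i_1,i_r}$, and $i_1$ is never a target, so conjugating $Z_{i_1}$ successively through these (they commute, all sharing control $i_1$) yields $Z_{i_1}Z_{i_2}\cdots Z_{i_r}$ — exactly the single-block term appearing in $D_{enc}$. Fourth, for the two-block term: $B^\dagger Z_{i_1}Z_{j_1} B = (B^\dagger Z_{i_1} B)(B^\dagger Z_{j_1} B) = (Z_{i_1}\cdots Z_{i_r})(Z_{j_1}\cdots Z_{j_r})$, which up to reordering of commuting $Z$'s is the term $Z_{i_1}Z_{j_1}Z_{i_2}Z_{j_2}\cdots Z_{i_r}Z_{j_r}$ in $D_{enc}$. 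Hence each generator $e^{i\theta_1 Z_{i_1} + i\theta_2 Z_{j_1} + i\theta_3 Z_{i_1}Z_{j_1}}$ in $C_1$ maps under conjugation by $B$ precisely to $D_{enc}$, and multiplying over all gates gives $B^\dagger C_1 B = C_{enc}$.

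\textbf{Main obstacle.} The argument is essentially routine once the setup is fixed; the one genuinely delicate point is making sure the conjugation is applied with the blocks of CNOTs for \emph{all} logical qubits simultaneously, not just one block. I need to check that conjugating a gate supported on blocks $i$ and $j$ by the CNOTs belonging to blocks $k \neq i,j$ acts trivially — which holds because those CNOTs act on disjoint sets of qubits and commute with the gate — so $B$ effectively reduces to the relevant sub-network $\prod_{j=2}^r CNOT_{i_1,i_j}\prod_{j=2}^r CNOT_{j_1,j_j}$ on exactly the terms we care about. A secondary bookkeeping point is that the gates of $C_1$ all commute with one another (being diagonal), so the order of the product is immaterial and conjugation distributes cleanly; I would state this explicitly. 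Finally, I would note that $B$ is its own inverse up to the ordering of commuting CNOTs (each CNOT is an involution and they all commute), so writing $B^\dagger$ versus $B$ is immaterial, which matches the phrasing of the lemma.
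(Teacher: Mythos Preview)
Your proposal has a genuine gap: the CNOT--$Z$ conjugation rule you state is backwards. For $CNOT_{a,b}$ with control $a$ and target $b$, one has $CNOT_{a,b}\, Z_a\, CNOT_{a,b} = Z_a$ (the control $Z$ is \emph{unchanged}) and $CNOT_{a,b}\, Z_b\, CNOT_{a,b} = Z_a Z_b$ (the target $Z$ fans out). Since in $B$ each qubit $i_1$ is always a \emph{control}, your computation $B^\dagger Z_{i_1} B = Z_{i_1}\cdots Z_{i_r}$ is false; the correct result is $B^\dagger Z_{i_1} B = Z_{i_1}$, and your argument collapses.

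The deeper issue is that you have treated $C_1$ as a bare product of diagonal gates, but an IQP circuit has the form $H^{\otimes n} D H^{\otimes n}$, and those Hadamards are precisely what make the lemma work. The paper's proof handles this by writing $C_1 = H^{\otimes nr}\big(\prod_l D_{l,1}\big) H^{\otimes nr}$ (padding with $HH=\ident$ on the non-first qubits), then pushing $B$ past the outer Hadamards using the identity $(H_x\otimes H_y)\,CNOT_{x,y}\,(H_x\otimes H_y)=CNOT_{y,x}$. This converts $B$ into a network $B_H$ of CNOTs in which $i_1$ is now a \emph{target}, so that conjugating $Z_{i_1}$ by $B_H$ genuinely yields $\prod_j Z_{i_j}$ and the encoded diagonal $D_{enc}$ appears. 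Without accounting for the Hadamard layers, there is no mechanism that spreads $Z_{i_1}$ across the block, and your gate-by-gate reduction cannot reach $C_{enc}$.
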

\begin{proof}
Suppose $C$ is given by the diagonal gates $D_1,\ldots,D_m$, so that $C =  H^{\otimes nr} \prod_{l \in 1,\ldots,m} D_l H^{\otimes nr}$. Suppose $C_{enc}$ is given by the diagonal gates $D_{1}',\ldots,D_{m}'$, so that the unitary produced by $C_r$ is $C_{enc} = H^{\otimes nr} \prod_{l \in 1,\ldots,m} D_{l}' H^{\otimes nr}$. 

Define $B_H =H^{\otimes n} B H^{\otimes n}$. By a circuit identity, conjugating a CNOT with Hadamards on either qubit reverses the target and the control (i.e. $(H_x \otimes H_y)CNOT_{x,y}(H_x \otimes H_y) = CNOT_{y,x}$). Therefore,
\begin{align}
    B_H = \prod_{i \in 1,\ldots,n}\prod_{j \in 2,\ldots,r}CNOT_{i_j,i_1}
\end{align}

Due to another circuit identity that $CNOT_{x,y}(\ident_x \otimes Z_y)CNOT_{x,y} =Z_x \otimes Z_y$, we can establish that $B_H$ has the following property,
\begin{align}
    B_H^\dagger Z_{i_1}B_H = \prod_{j =1,\ldots,b} Z_{i_j}
\end{align}
Suppose we use $D_{l,1}$ to denote gate $D_l$ applied transversely to the first qubits of each block ($1_1,2_1,\ldots,nr_1$). By examining \cref{Eq:Transform} and comparing it the above, one can see that
\begin{align}
    B_H^\dagger D_{l,1}B_H = D_{1}'
\end{align}

Therefore, we have,
\begin{align}
    B^\dagger C_{1} B &= B^\dagger \bigotimes_{i \in 1,\ldots,n} H_{i_1} \prod_{l \in 1,\ldots,m} D_{l,1} \bigotimes_{i \in 1,\ldots,n} H_{i_1} B \\
    &= B^\dagger H^{\otimes nr} \prod_{l \in 1,\ldots,m} D_{l,1} H^{\otimes nr} B \\
    &=  H^{\otimes nr} B_H^\dagger \prod_{l \in 1,\ldots,m} D_{l,1} B_H H^{\otimes nr}\\
    &= H^{\otimes nr} \left( \prod_{l \in 1,\ldots,m} B_H^\dagger D_{l,1} B_H\right) H^{\otimes nr}\\
    &= H^{\otimes nr} D_{l}'  H^{\otimes nr}\\
    &= C_{enc}
\end{align}
where in the second step we have used the fact that $H H = I$ (so we can add a pair of Hadamards to every qubit without a Hadamard).
\end{proof}

Ref. \cite{bergamaschi2024sample}  use a similar encoding scheme to ours. 
Specifically, they construct circuit $C' =  B'C_1$, where $B'$ is a different $CNOT$ network than $B$, but also acting amongst blocks of $r$ qubits. 
By examining $B'$, a simple adaptation of the above proof will show that $B' C_1 B'^\dagger = C_{enc}$. This means that the encodings of Ref. \cite{Bremner_2017}, Ref.\cite[Lemma 8.1]{bergamaschi2024sample}, and \cref{Lemma:CNOT} are all exactly equivalent up to CNOT networks on the output state. 
Since CNOTs commute with measurement, this means their output distributions are equivalent modulo post-processing. This suggests that further improvements to the noise-robustness of these constructions might require changes to the underlying error-correction code (the repetition code) rather than circuit-level optimisations. We also use the fact that CNOTs commute with measurement to establish the following lemma.
\begin{lemma} \label{Lemma:CNOT_Encoding}
Let $C$ be an arbitrary IQP circuit constructed with $2$-qubit gates of depth $d$ on $n$ qubits. Then, for any integer parameter $r \geq 1$, there is an encoded IQP circuit $C^*$ constructed with $2$-qubit gates of depth $d+r$ on $nr$ qubits, and a decoding algorithm $A$ such that,
\begin{align}
    A^*(P_{C^*,q}) = P_{C,\pfail}
\end{align}
where $\pfail \leq (4q(1-q))^{r/2}$. Furthermore, the parent Hamiltonian of $C^*$ has locality $k \leq d+2$ and degree $\Delta \leq r(d+1)$
\end{lemma}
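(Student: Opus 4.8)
The plan is to combine the CNOT-network encoding of \cref{Lemma:CNOT} with the majority-vote decoding of \cref{Lemma:BMS}, and then track how the circuit depth, locality, and interaction-graph degree transform under conjugation by the CNOT network. I would proceed in three stages.

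\textbf{Stage 1: Constructing $C^*$ and $A^*$.} Take the logical circuit $C$ of depth $d$ on $n$ qubits. Following \cref{Lemma:CNOT}, let $C_1$ be $C$ applied transversally to the first qubit of each block, let $B = \prod_i \prod_{j=2}^r CNOT_{i_1,i_j}$, and set $C^* = B^\dagger C_1 B$. By \cref{Lemma:CNOT} this equals $C_{enc}$, the $2r$-local encoded circuit of \cref{Lemma:BMS}, so the decoder of \cref{Lemma:BMS}, call it $A$, already satisfies $A(P_{C_{enc},q}) = P_{C,\pfail}$ with $\pfail \le (4q(1-q))^{r/2}$. The subtlety is that we want $C^*$ written with only $2$-qubit gates so that the parent Hamiltonian is low-locality, not with the $2r$-local gates that appear in $C_{enc}$ directly. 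That is exactly what the factorization $C^* = B^\dagger C_1 B$ buys us: $B$ and $B^\dagger$ are CNOT networks ($2$-qubit gates), and $C_1$ is the original depth-$d$ circuit acting transversally, so it is built from the same $2$-qubit gates as $C$. Since each of $B, B^\dagger$ can be implemented in depth $r$ (all CNOTs in a block share the control $i_1$, but a binary-tree fan-out — or simply $r$ sequential layers — suffices; $r$ layers is the safe bound matching the statement), the total depth of $C^*$ is $d + r$. To define $A^*$, we use that measuring in the computational basis after $B^\dagger$ is, up to the deterministic classical linear map induced by the CNOT network, the same as measuring after $C_1 B$; since CNOTs commute with computational-basis measurement (copying the control's value into the target classically), sampling $P_{C^*,q}$ and then applying the classical reversible post-processing corresponding to $B^\dagger$ yields samples from $P_{C_{enc},q}$ — more carefully, we must check that the input bit-flip channel $\cD_q^{\otimes nr}$ commutes appropriately, but since the noise acts on the input register \emph{before} any gates and $B^\dagger$ is the first thing applied, $C^*(\cD_q^{\otimes nr}(\cdot)) = B^\dagger C_1 B (\cD_q^{\otimes nr}(\cdot)) B^\dagger C_1^\dagger B$, and we just absorb the trailing $B$ and leading $B^\dagger$ into classical post-processing on the measured bits. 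Composing this classical post-processing with $A$ gives $A^*$ with $A^*(P_{C^*,q}) = P_{C,\pfail}$.

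\textbf{Stage 2: Locality of the parent Hamiltonian.} By \cref{Lemma:Gibbs_to_Circuit_Sampling_Main}, the parent Hamiltonian $H_{C^*} = C^* \HNI C^{*\dagger}$ is $O(k' \cdot \text{depth})$-local in general, but here we want the sharper bound $k \le d+2$. I would argue this directly by tracking how a single term $\tfrac12(\ident - Z_v)$ of $\HNI$ is conjugated. Conjugating $Z_v$ by $B$ (a CNOT network) can only spread $Z_v$ to the qubits in $v$'s block along CNOT edges; in fact conjugating $Z_{i_j}$ by $B$ gives a product of $Z$'s on at most the pair $\{i_1, i_j\}$ or similar — the key point from the $CNOT_{x,y}(\ident_x\otimes Z_y)CNOT_{x,y} = Z_x\otimes Z_y$ identity used in \cref{Lemma:CNOT} is that a single $Z$ conjugated through $B$ becomes supported on at most $2$ qubits of its block (or all $r$ for the $i_1$ qubit, but we arrange the Hamiltonian terms so this is controlled). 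Then conjugating by the transversal circuit $C_1$: each $2$-qubit gate at depth $\le d$ that touches the support can at most double... no — more precisely, after the CNOT layer the relevant operator on each "active" block is a single-qubit $Z$ on the first qubit (that is the content of $B_H^\dagger Z_{i_1} B_H = \prod_j Z_{i_j}$ read in reverse), so the depth-$d$ logical circuit $C_1$, acting only on the first qubits $1_1,\dots,n_1$, spreads this $Z$ over at most ... this is where I must be careful, and I would instead appeal to the fact that in \cref{Lemma:HangleiterCircuit} / the Hangleiter circuits the logical $C$ is itself a \emph{constant-depth} IQP circuit, so $d = O(1)$ and the logical spreading is $O(1)$; combined with the block structure one gets locality $k \le d + 2$ as claimed (the "$+2$" accounting for the CNOT-induced pair). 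The honest statement to prove is: conjugating $Z_v$ by $C^* = B^\dagger C_1 B$ produces an operator supported on at most $d+2$ qubits, which I would verify by the layered argument above.

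\textbf{Stage 3: Degree of the interaction graph.} For $\Delta \le r(d+1)$: fix a physical qubit $w$; I count how many terms of $H_{C^*} = \sum_v C^* \tfrac12(\ident-Z_v) C^{*\dagger}$ act nontrivially on $w$. The term indexed by $v$ acts on $w$ only if $w$ lies in the support of the conjugated $Z_v$. From the structure above, $w$ (say in block $b$) is in the support of the conjugated $Z_v$ only if $v$ is in a block that interacts with block $b$ through the logical circuit — and since the logical circuit has depth $d$, each logical qubit interacts with $O(d)$ others (each gate involves $2$ qubits, $d$ layers, so at most $d+1$ logical neighbors including itself) — and $v$ ranges over the $r$ qubits of each such block, giving $\le r(d+1)$ terms. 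I would make this precise by noting: (i) the CNOT network confines most spreading within a block, (ii) the only inter-block spreading comes from the $\le d$ logical gates on the first-qubit rail, each contributing a bounded-degree connection, and (iii) multiplying the at-most-$(d+1)$ logical-neighbor-blocks by the block size $r$ gives the bound.

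\textbf{Main obstacle.} The technical heart — and the step I expect to be fiddliest — is Stage 2/3: cleanly bounding the \emph{locality} and \emph{degree} of $H_{C^*}$ by tracking Pauli propagation through $B^\dagger C_1 B$. Conjugation by a CNOT fan-out network does \emph{not} in general keep operators low-weight (conjugating $Z_{i_1}$ by $B$ spreads it to all $r$ qubits in the block), so the argument must exploit that the $\HNI$ terms, when pushed through the \emph{outermost} $B^\dagger$ first, land as controlled single-$Z$'s that then only see the shallow logical circuit. Getting the order of conjugation right (which $B$ is "outer"), and confirming that the block-confinement really does hold for the specific $B$ used so that no term becomes supported on $\gg d+2$ qubits or touches $\gg r(d+1)$ qubits, is where the real care is needed; everything else (the decoder, the $\pfail$ bound, commuting CNOTs with measurement) is inherited from \cref{Lemma:CNOT} and \cref{Lemma:BMS}.
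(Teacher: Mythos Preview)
Your proposal has a genuine gap at the very first step, and it is exactly the issue you flag in your ``Main obstacle'' paragraph but do not resolve: you take $C^* = B^\dagger C_1 B$, whereas the paper takes $C^* = C_1 B$ and pushes the remaining $B^\dagger$ into the classical decoder (using that CNOTs permute computational basis states, so measuring then applying $B^\dagger$ equals applying $B^\dagger$ then measuring). This difference is not cosmetic; it drives both numerical claims.

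First, depth. With $C^* = B^\dagger C_1 B$ you have two CNOT layers of depth $r-1$ each, so the depth is $d + 2(r-1)$, not $d+r$; your sentence ``each of $B,B^\dagger$ can be implemented in depth $r$ \dots\ the total depth of $C^*$ is $d+r$'' simply does not add up.

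Second, and more seriously, locality. With the paper's choice $C^* = C_1 B$ one has $H_{C^*} = C_1 \bigl( B\, \HNI\, B^\dagger \bigr) C_1^\dagger$. The inner conjugation is clean: $B Z_{i_1} B^\dagger = Z_{i_1}$ (control of a CNOT is transparent to $Z$) and $B Z_{i_j} B^\dagger = Z_{i_1} Z_{i_j}$ for $j\ge 2$, so $B\,\HNI\,B^\dagger$ is a sum of $1$- and $2$-local $Z$-strings. Then $C_1$ acts only on the first rail and has depth $d$, so $C_1 Z_{i_1} C_1^\dagger$ is supported on at most $d{+}1$ first-rail qubits, giving $(d{+}2)$-local terms and the degree bound by the counting you sketch in Stage~3. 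With your choice $C^* = B^\dagger C_1 B$ there is an \emph{additional} outer conjugation by $B$. But $C_1 Z_{i_1} C_1^\dagger$ generically contains $X_{j_1}$ components on the first-rail neighbours $j$ of $i$ (the final Hadamard layer of the IQP circuit turns the diagonal $Z$-dressing into $X$'s), and $B X_{j_1} B^\dagger = X_{j_1}X_{j_2}\cdots X_{j_r}$ spreads across the entire block. Hence the locality of $H_{C^*}$ with your definition is $\Theta(r(d{+}1))$, not $d+2$, and the lemma fails.

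The fix is precisely the paper's: drop the leading $B^\dagger$ from the \emph{physical} circuit, set $C^* = C_1 B$, and let $A^* = A \circ B^\dagger$ where $B^\dagger$ acts as the classical linear map on sampled bitstrings. Then $A^*(P_{C^*,q}) = A(P_{B^\dagger C_1 B,\,q}) = A(P_{C_{enc},q}) = P_{C,\pfail}$ by \cref{Lemma:BMS}, and the locality/degree computation goes through as above without any outer CNOT layer to spread the $X$'s.
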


\begin{proof}
    Using the notation of \cref{Lemma:CNOT}, define $C^* = C_1 B$.  $B^\dagger$ is composed of classical gates, which means it commutes with measurement. Therefore, we can apply it on the output distribution of $C_1 B$, and then perform the decoding algorithm $A$ of \cref{Lemma:BMS}. That is,
    \begin{align}
        A^*(P_{C^*,q}) = A(B P_{C^*,q}) = A(P_{BC^*,q}) = A(P_{C_r,q}) = P_{C,\pfail}
    \end{align}

    To calculate the locality, we use the following circuit identities.
    \begin{align}
    CNOT_{x,y} (\ident_x \ox Z_y)CNOT_{x,y} &= Z_x\ox Z_y \\
    CNOT_{x,y} (Z_x\ox \ident_y)CNOT_{x,y} &= Z_x\ox \ident_y
\end{align}
We first conjugate $H_{NI}$ with $B$
    \begin{align*}
      &BH_{NI}B^\dagger \\
      &= B \sum_{i \in [n],j \in [r]} 
     \frac{\ident-Z_{i_j}}{2} B^\dagger\\
     &=  \sum_{i \in [n]}  B\frac{\ident-Z_{i_1}}{2}B^\dagger  + \sum_{i \in [n],j \in [2,r]} B\frac{\ident-Z_{i_j}}{2} B^\dagger  \\
      &=  \sum_{i \in [n]}\prod_{j \in [2,n]} CNOT_{i_1,i_j} \frac{\ident-Z_{i_1}}{2}\prod_{j \in [2,n]} CNOT_{i_1,i_j} + \sum_{i \in [n],j \in [2,r]} CNOT_{i_1,i_j} \frac{\ident-Z_{i_j}}{2}CNOT_{i_1,i_j}  \\
      &= \sum_{i \in [n]} \frac{\ident -Z_{i_1}}{2}+ \sum_{i \in [n],j \in [2,r]}  \frac{\ident -Z_{i_1} Z_{i_j}}{2}  
    \end{align*}
    Note that for each qubit $i_1$, there are at most $d$ diagonal gates acting on qubit $i_1$ in $C_{1}$. Thus, when conjugating the $Z_{i_1}Z_{i_j}$ term with $C_1$, the resulting interaction term spans at most $d+2$ qubits, so $k \leq d+2$. 

    Each qubit $i_1$ appears in all $r-1$ terms of the form $C_1 Z_{i_1}Z_{i_j}C_1^\dagger$. It also appears in all $r-1$ terms of the form $C_{1}Z_{i'_1}Z_{i'_j}C_{1}^\dagger$ for each of the $d$ neighbours $i'$. If we count the $Z_{i_1}$ terms as well, the total degree is $\Delta \leq (d+1)r$
\end{proof}

\end{document}